\PassOptionsToPackage{table,xcdraw}{xcolor}
\documentclass[sigplan]{acmart}

\usepackage{algorithm}
\usepackage{algpseudocode}
\usepackage{graphicx}
\usepackage{subcaption}
\usepackage{multirow}
\usepackage[table,xcdraw]{xcolor}
% Our own commands
\usepackage{commands}
\usepackage{adjustbox}
\usepackage{balance}

% matrix
\usepackage{tikz}
\usetikzlibrary{matrix,decorations.pathreplacing}

\author{Keren Zhou}
\authornote{The authors contributed equally to this research.}
\email{kzhou6@gmu.edu}
\affiliation{%
  \institution{George Mason University}
  \city{Fairfax}
  \country{United States}
}
\affiliation{%
  \institution{OpenAI}
  \city{San Fancisco}
  \country{United States}
}

\author{Mario Lezcano-Casado}
\authornotemark[1]
\email{lezcano@openai.com}
\affiliation{%
  \institution{OpenAI}
  \city{San Fancisco}
  \country{United States}
}

\author{Adam P. Goucher}
\authornotemark[1]
\email{apgoucher@openai.com}
\affiliation{%
  \institution{OpenAI}
  \city{San Fancisco}
  \country{United States}
}

\author{Akhmed Rakhmati}
\email{arakhmati@openai.com}
\affiliation{%
  \institution{OpenAI}
  \city{San Fancisco}
  \country{United States}
}

\author{Jeff Niu}
\email{jeffniu@openai.com}
\affiliation{%
  \institution{OpenAI}
  \city{San Fancisco}
  \country{United States}
}

\author{Justin Lebar}
\email{jlebar@openai.com}
\affiliation{%
  \institution{OpenAI}
  \city{San Fancisco}
  \country{United States}
}

\author{Pawel Szczerbuk}
\email{pawel.szczerbuk@openai.com}
\affiliation{%
  \institution{OpenAI}
  \city{San Fancisco}
  \country{United States}
}

\author{Peter Bell}
\email{peterbell10@openai.com}
\affiliation{%
  \institution{OpenAI}
  \city{San Fancisco}
  \country{United States}
}

\author{Phil Tillet}
\email{phil@openai.com}
\affiliation{%
  \institution{OpenAI}
  \city{San Fancisco}
  \country{United States}
}

\author{Thomas Raoux}
\email{thomas.raoux@openai.com}
\affiliation{%
  \institution{OpenAI}
  \city{San Fancisco}
  \country{United States}
}

\author{Zahi Moudallal}
\email{zahi@openai.com}
\affiliation{%
  \institution{OpenAI}
  \city{San Fancisco}
  \country{United States}
}

% enable page numbers
%\settopmatter{printfolios=true}

\copyrightyear{2026}
\acmYear{2026}
\setcopyright{cc}
\setcctype{by}
\acmConference[ASPLOS '26]{Proceedings of the 31st ACM International Conference on Architectural Support for Programming Languages and Operating Systems, Volume 1}{March 22--26, 2026}{Pittsburgh, PA, USA}
\acmBooktitle{Proceedings of the 31st ACM International Conference on Architectural Support for Programming Languages and Operating Systems, Volume 1 (ASPLOS '26), March 22--26, 2026, Pittsburgh, PA, USA}
\acmPrice{}
\acmDOI{10.1145/3760250.3762221}
\acmISBN{979-8-4007-2165-6/26/03}

\begin{document}
\title{Linear Layouts: Robust Code Generation of Efficient Tensor Computation Using \texorpdfstring{$\F$}{F(2)}}

\begin{abstract}
Efficient tensor computation is a cornerstone of modern deep learning (DL) workloads, yet existing approaches struggle to achieve flexible and performant design and implementation of tensor layouts—mappings between logical tensors and hardware resources.  
The increasing complexity of DL algorithms and hardware demands a generic and systematic approach to handling tensor layouts.  
In this work, we introduce \emph{Linear Layouts}, a novel approach that models tensor layouts using linear algebra over $\F$.  
By representing tensor layouts as binary matrices acting on the bits of the hardware representation, our approach enables a generic layout definition—as opposed to the classical case-by-case approach—and allows for generic layout-to-layout conversions, eliminating the quadratic explosion that plagues existing solutions.
We integrate linear layouts with Triton and demonstrate their effectiveness in optimizing individual Triton operators as well as kernels written in Triton.  
We also show that linear layouts reduce engineering effort in the compiler backend while fixing several bugs in Triton's legacy layout system.
\end{abstract}

\keywords{GPU, Linear Algebra, Triton, Tensor Layouts, Deep Learning}

\maketitle % should come after the abstract

% add the paper content here

\section{Introduction}

Deep learning (DL) models are rapidly growing in both scale and architectural complexity~\cite{shoeybi2019megatron,rajbhandari2020zero}.
Modern DL models such as deep transformers now contain billions of parameters~\cite{dubey2024llama,achiam2023gpt} and employ varied structures~\cite{fedus2022switch,vaswani2017attention,kwon2023efficient} with low precisions~\cite{frantar2022gptq,liu2024spinquant,lin2024awq}, pushing the limits of current hardware and software optimizations.
Notably, there is a pressing need for more efficient tensor computation~\cite{abadi2016tensorflow,jax2018github,ansel2024pytorch}, which is a fundamental building block of DL models.
The performance of tensor computation often relies on sophisticated mappings between logical tensors and hardware compute and memory resources, which we refer to as \textit{tensor layouts}~\cite{hidet202X, graphene202X, amos202X}.
We demonstrate two example layouts in~\cref{fig:example-layout}.

The growing complexity of DL hardware, such as GPUs, leads to increasingly intricate tensor layouts.
For example, to enable efficient matrix multiplication, Nvidia GPUs incorporate different layouts to use Tensor Cores on Ampere, Hopper, and Blackwell generations, each with different variants when using different data types~\cite{nvidia_ptx_isa}.
Other GPU vendors, such as AMD and Intel, implement distinct layouts when leveraging their tensor core equivalence~\cite{amd_matrix_cores,intel_xmx} for acceleration.
Consequently, the rapid advancements in hardware architectures and varied DL models demand a new approach to modeling tensor layouts.

\begin{figure}[tp]
  \centering
  \begin{subfigure}[b]{0.45\linewidth}
    \centering    \includegraphics[width=\linewidth]{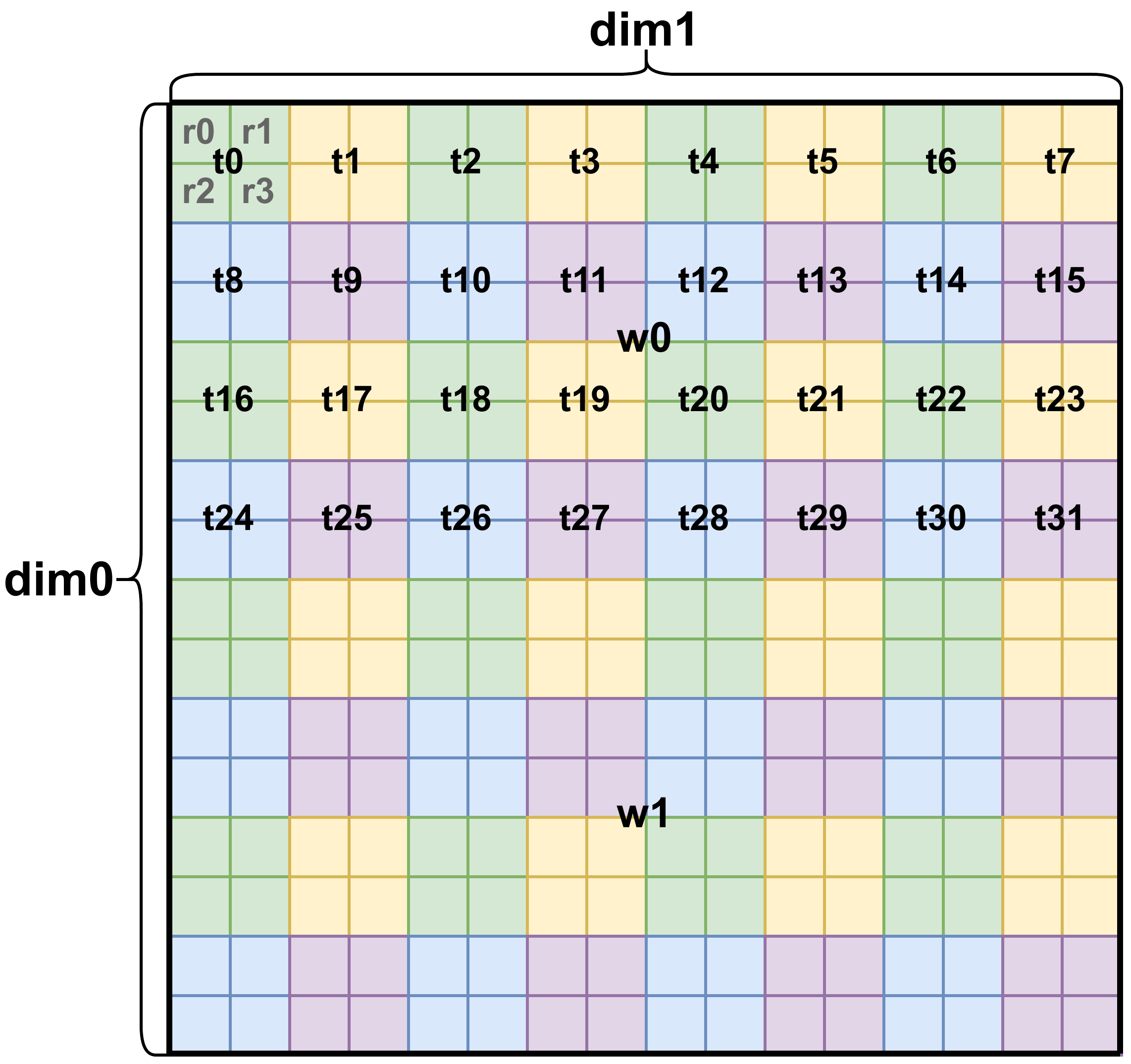}
    \caption{Layout A}
    \label{fig:example-layout-1}
  \end{subfigure}
  \hfill
  \begin{subfigure}[b]{0.45\linewidth}
    \centering
    \includegraphics[width=\linewidth]{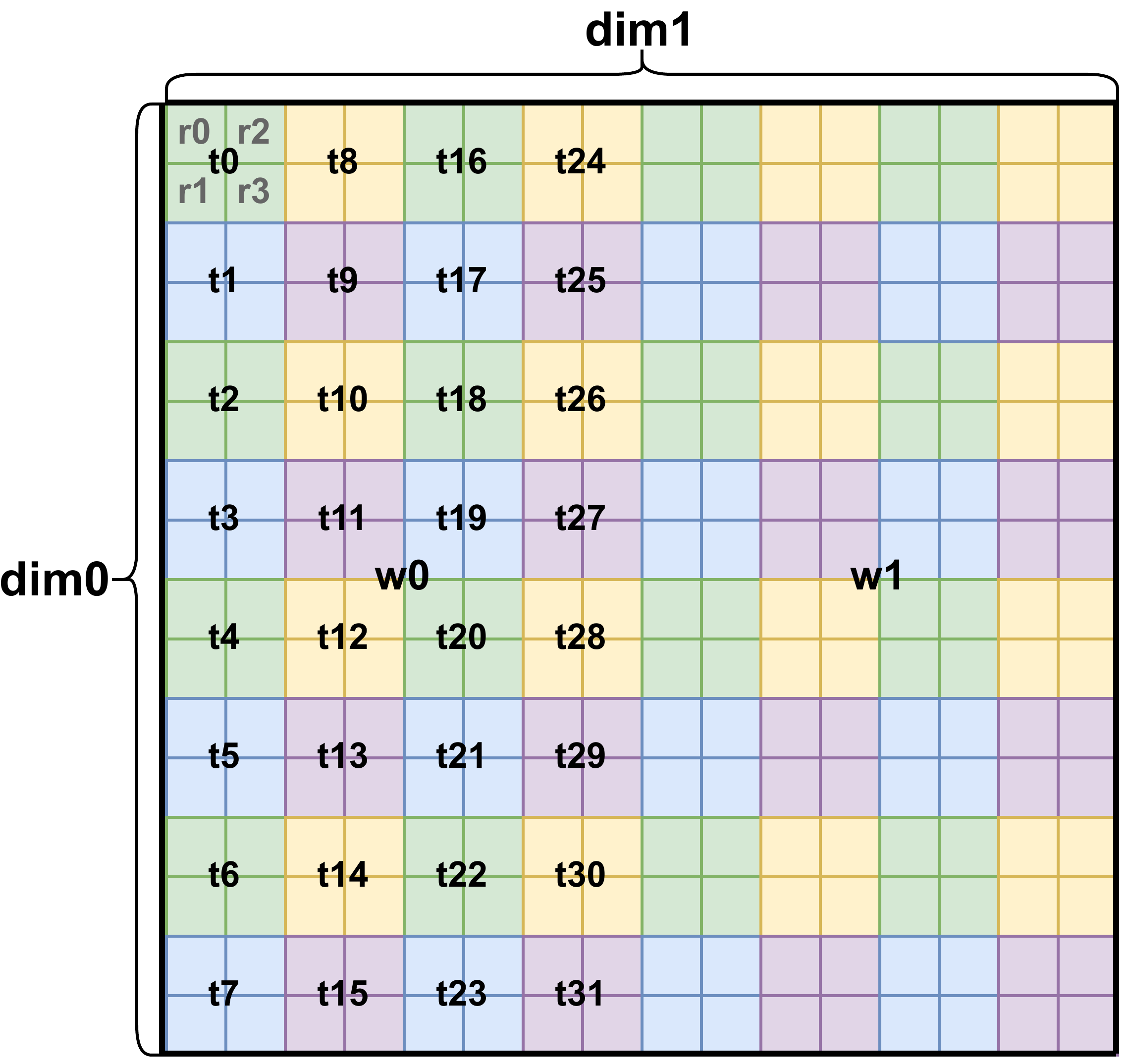}
    \caption{Layout B}
    \label{fig:example-layout-2}
  \end{subfigure}
  \caption{Two different layouts storing a $16\times16$ tensor using two warps. $w_{i}$ denotes warp $i$, $t_{j}$ denotes thread $j$, and $r_{k}$ denotes register $k$.
  If the tensor is stored in row-major format, loading it into layout A is more efficient than into layout B due to coalesced reads.}
    \vspace{-2em}
  \label{fig:example-layout}
\end{figure}

However, current DL programming models and libraries for tensor computation lack a solution for flexible and efficient tensor layout construction and conversion.
DL practitioners often rely on highly-optimized vendor libraries (e.g., NVIDIA cuDNN~\cite{chetlur2014cudnn}, cuBLAS~\cite{nvidia_cublas}) to achieve peak performance.
While these libraries excel for standard operations, they support only a limited set of tensor operators.
A custom operator introduced by a new model falls outside their coverage, forcing developers to implement GPU kernels from scratch, dealing with intricate layout-related issues.
DL compilers such as TVM~\cite{TVM2018}, XLA~\cite{XLA2020}, and Triton~\cite{tillet2019triton} implement tensor layouts as special attributes within their compiler backends.
However, only a limited set of layouts and conversions between layouts are supported in these compilers, lacking a generic, robust, and systematic framework.
Defining custom layouts requires substantial modifications to the compiler, leading to a quadratic explosion of layout-to-layout conversions.
Manually implementing these layouts and their conversions is often error-prone; to date, 12\% of bugs filed in Triton's GitHub repository~\cite{triton_layout_bugs} are layout-related.
Moreover, without treating tensor layouts as a first-class citizen for optimization, often suboptimal data movement incurs in tensor computation and layout conversions.
For example, FlashAttention 3~\cite{shah2025flashattention} manually optimizes data movement using byte permute and warp shuffle instructions to bypass shared memory in layout conversions—an approach that has not yet been implemented in DL compilers.
%In short, there is no unified framework to express arbitrary layout representations and conversions and generate efficient hardware primitives for them.

Bridging this gap requires overcoming several technical challenges.
First, we need a general and composable representation for mapping tensors to hardware resources.
Second, layout conversions should be expressed within a unified formalism, incorporating even complex transformations such as data swizzling~\cite{SwizzlingComputerGraphics}.
Third, this representation must seamlessly integrate with low-level hardware optimizations to ensure efficient data access and computation.
%The former requires tensor layouts to preserve properties like coalesced global memory accesses, avoiding shared memory bank conflicts, and utilization of tensor cores.
%The latter requires identification of the most efficient paths of data movement.
%This means

In this work, we propose \textit{Linear Layouts}, an approach that addresses these challenges by treating tensor layouts as linear mappings between vector spaces over the field $\F$, leveraging linear algebra as a unifying abstraction for operations on layouts.
Every tensor layout is modeled as a linear function---a matrix---that maps physical resource indices into a logical tensor of size $2^n$ using binary arithmetic on the bits of the input and the output.
This way, complex representations such as swizzling and broadcasting are naturally expressed as combinations of $\operatorname{XOR}$ and $\operatorname{AND}$ operations on bit-vectors.
Furthermore, arbitrary layout conversions can be composed using matrix transformations such as matrix multiplication and inverse, which enable a formal characterization of data exchanges both across and within the hardware hierarchy, thereby allowing the compiler to generate efficient hardware primitives for data movement generically.
It eliminates the need for hard-coded, case-by-case handling of layouts---any layout that can be represented as a permutation of indices or via swizzling can be plugged into our framework and automatically optimized.

\begin{figure}
    \centering
    \includegraphics[width=0.8\linewidth]{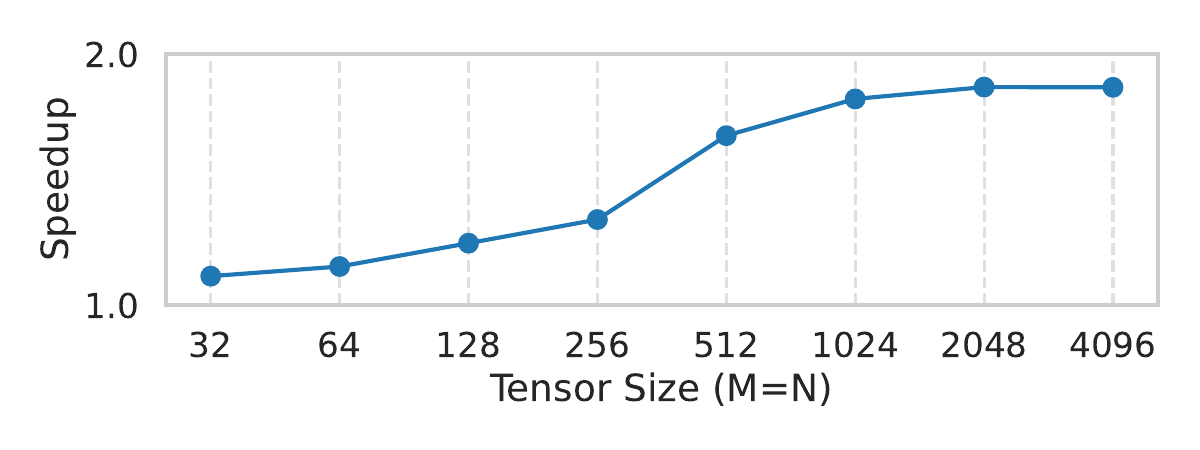}
    \caption{Speedups across a range of tensor shapes compared to the padding heuristic in a float8 transpose kernel for tensors of size $M \times N$.}
    \label{fig:matrix-size}
    \vspace{-1em}
\end{figure}

We implement linear layouts as part of the code generation workflow in Triton’s GPU backend, which is widely used to customize deep learning operators on GPUs from various vendors.
To assess the effectiveness of linear layouts, we compare the correctness and performance of the generated kernels against those produced by legacy Triton, which does not use linear layouts.
Legacy Triton relies on heuristics (e.g., shared memory padding) for layout-based code generation and optimization, which are effective for common access patterns.
However, we observe that it causes many bugs in layout conversions, lacks extensibility for supporting flexible layouts, and delivers suboptimal performance for complex tensor access patterns (see \autoref{fig:matrix-size}).
Experimental results demonstrate that our approach improves correctness and yields up to 1.40$\times$ speedup, with an average of 1.07$\times$ across 265 real-world benchmark cases.
This paper makes the following contributions:
\begin{itemize}
    \item We present linear layouts, a novel approach that uses linear algebra over $\F$ to represent and compose tensor layouts within a unified framework.
    \item We fully integrate linear layouts into Triton's GPU backend, implementing a layout engine that is able to automatically choose and propagate layouts for any operation in Triton.
    \item We introduce novel algorithms, including automatic optimal swizzling discovery that provably maximizes read/write vectorization and minimizes read/write bank conflicts, automatic optimal warp-shuffle generation, and generic lowering of hardware intrinsics for all the layouts of this family.
    \item We evaluate linear layouts on both synthetic and real DL workloads, demonstrating that it outperforms existing baselines.
    Furthermore, we demonstrate that linear layouts enhances robustness by fixing many pre-existing bugs in Triton.
\end{itemize}

\section{Background}
In this section, we introduce the architecture and mathematic background necessary for this paper.

\subsection{GPU Architectures}

Modern GPUs are designed to exploit extreme parallelism through a hierarchical execution model that includes multiple levels of hardware resources.
The key execution units include cooperative thread arrays (CTAs), warps, and threads.
Each GPU thread has access to private registers, which offer the lowest-latency storage but are limited in capacity.
Regular instructions can be executed independently by individual threads.
However, some special function units must be executed at a higher granularity level.
For example, NVIDIA's \texttt{mma} (matrix multiply-accumulate) instruction~\cite{nvidia_ptx_isa} utilizes tensor cores by performing multiple multiply–add operations in parallel, issued by individual warps.
Advanced variants such as \texttt{wgmma} (warp group matrix multiply-accumulate)~\cite{nvidia_ptx_isa} extend these capabilities by executing matrix multiplication on multiple warps together.
AMD has also introduced similar primitives, such as \texttt{mfma} (matrix fused multiply–add) instructions~\cite{amd_matrix_cores}.
Note that these instructions require data to be distributed across threads and warps, or reside in shared memory or special memory units (e.g., Tensor Memory on Blackwell~\cite{nvidia_ptx_tensor_memory}) in special layouts to yield correct results.
However, these layouts do not typically yield the best performance for other operations like load/store, and not always can one use specific instructions to directly copy data from the global memory to the special memory units.
As a result, data must often be rearranged so that the layout used for memory accesses (which emphasizes coalescence and bandwidth) is converted into the layout preferred by the compute units (which emphasizes arithmetic throughput).
In short, achieving peak performance requires not only leveraging these specialized units but also carefully designing tensor layouts and conversions.

\subsection{Triton Language and Compiler}

Triton~\cite{tillet2019triton} is a Python-like domain-specific language designed to offer flexible interfaces for writing high-performance deep learning primitives.
Triton's compiler backend leverages MLIR~\cite{lattner2021mlir}, which enables the expression of abstractions at multiple levels and facilitates a smooth lowering process through a series of dialects.
%Figure~\ref{fig:triton-workflow} illustrates Triton’s compilation workflow from high-level Triton code to device-specific representations.

At its core, a Triton kernel follows the single program multiple data (SPMD) model, wherein computation is partitioned into multiple abstract Triton program instances.  
This design allows developers to focus primarily on parallelism at the CTA level, as an operator in Triton is applied across all threads within each program instance.
In Triton, the term \emph{tensor} refers to \emph{tiles} extracted from the original PyTorch tensors, which serve as the inputs and outputs for GPU kernels.  

During compilation, Triton's Python code is first translated into the Triton dialect (\code{tt}), which is further translated into the TritonGPU dialect (\code{ttg}).  
Throughout this process, each tensor is associated with a specific layout to take full advantage of hardware function units available on modern GPUs.
For instance, Tensor Cores and similar units are utilized with a \code{mma} layout when dot-like operators~\cite{triton_language} (e.g., \code{tt.dot} and \code{tt.dot\_scaled}) are encountered.
%For example, in Figure~\ref{fig:triton-workflow}, a \emph{blocked layout} is assigned to the tensor, specifying the order of tensor access, the number of registers per thread, the number of threads, and the number of warps.  

%\begin{figure}[tp]
%    \centering
%    \includegraphics[width=0.8\linewidth]{figures/triton-workflow.pdf}
%    \caption{The compilation workflow of Triton \keren{Update the figure with syntax highlight}}
%    \label{fig:triton-workflow}
%\end{figure}

\subsection{Linear Algebra Preliminaries}
We introduce the following concepts that provide the foundation for the Linear-Layout transformations used in subsequent sections.

\begin{itemize}
  \item \textbf{Vector Space.}  
        Let $\mathbb{F}$ be a field (e.g., $\mathbb{R}$).  
        A \emph{vector space} is a non-empty set $V$ (e.g., $\mathbb{R}^3$) of vectors over $\mathbb{F}$ and is equipped with  
        vector addition and scalar multiplication satisfying the eight vector-space axioms (associativity, commutativity, identity, inverses for addition; distributivity, compatibility, identity for scalar multiplication).

  \item \textbf{Subspace.}  
        A non-empty subset $S \subseteq V$ is a \emph{subspace} of $V$ if it is closed under the inherited operations.

  \item \textbf{Linear Combination.}  
        Given a set of $x_{1},x_{2},\dots,x_{n}\in V$ and scalars $a_{1},a_{2},\dots,a_{n}\in\mathbb{F}$,  
        the vector $v$ is a \emph{linear combination} of the vectors $x_{1},\dots,x_{n}$ if $v$ can be written in the following form.
        \[
            v = a_{1}x_{1} + a_{2}x_{2} + \dots + a_{n}x_{n}
        \]

  \item \textbf{Linear Independence.}
        A set of vectors $x_{1},\dots,x_{n} \in V$ is \emph{linearly independent} if the following equation has no nontrivial solutions $(a_1, \dots, a_n) \neq (0, \dots, 0)$:
        \[
            a_{1}x_{1} + a_{2}x_{2} + \dots + a_{n}x_{n} = 0
        \]
        
  \item \textbf{Span.}  
        For a subset $S\subseteq V$, the \emph{span} of $S$ is the set of all linear combinations of vectors in $S$:
        \[
            \sspan(S)\;=\;
            \left\{\sum_{i=1}^{k} a_{i} s_{i}\,|\,
            k\in\mathbb{N},\; s_{i}\in S,\; a_{i}\in\mathbb{F}\right\}.
        \]
        It is the smallest subspace of $V$ containing $S$.
  
  \item \textbf{Basis.}
        For a subset $S\subseteq V$, a \emph{basis} of $S$ is a set of vectors $x_{1},\dots,x_{n}$ such that $S=\sspan(\{x_{1},\dots,x_{n}\})$ and the set $\{x_{1},\dots,x_{n}\}$ is linearly independent.
\end{itemize}

\subsection{$\F$ Mathematics}

We denote the field of two elements $\{ 0, 1\}$ as $\mathbb{F}_2$.
In $\mathbb{F}_2$, all arithmetic operations are performed modulo 2. 
For example, addition is defined by
\[
a \oplus b = (a + b) \bmod 2 = a \operatorname{XOR} b
\]
which corresponds to logical $\operatorname{XOR}$, while multiplication is given by
\[
a \cdot b = (a \times b) \bmod 2 = a \operatorname{AND} b
\]
corresponding to logical $\operatorname{AND}$.

An essential operation in linear algebra over $\mathbb{F}_2$ is matrix multiplication. Let
\[
A \in \mathbb{F}_2^{m \times n} \quad \text{and} \quad B \in \mathbb{F}_2^{n \times p}
\]
be matrices whose elements are in $\mathbb{F}_2$.
The product $C = AB \in \mathbb{F}_2^{m \times p}$ is defined element-wise by
\[
c_{ij} = \bigoplus_{k=1}^{n} a_{ik} \cdot b_{kj},
\]
where the summation $\bigoplus$ represents repeated addition in $\mathbb{F}_2$ (i.e., XORing the products $a_{ik} \cdot b_{kj}$). This is analogous to standard matrix multiplication, with the distinction that all arithmetic is performed in $\F$.

Arithmetic in $\F$ naturally aligns with binary logic, making operations in this field highly efficient in hardware implementations. Consequently, $\F$ is widely used in areas such as cryptography~\cite{mullen2002finite} and error-correcting codes~\cite{pretzel1992error}.

\section{Overview}
\cref{fig:triton-layouts} lists all layouts available in Triton.
At the highest level, layouts are divided into Distributed and Memory layouts, where the former indicates that tensor elements are ``distributed'' across different execution units, while the latter indicates that tensor elements are stored in certain special memory.
Distributed layouts are further classified into types, including Blocked, Sliced, MMA, and MMA Input layouts, while Memory layouts can be further classified into Unswizzled and Swizzled layouts.
Blocked layouts are often used for contiguous memory accesses. 
MMA and MMA input layouts are used for the output and inputs of matrix multiplication operations (e.g., \code{tt.dot}).
MMA layouts can be further classified according to hardware instructions they map to, such as \code{mma} and \code{wgmma} on NVIDIA GPUs, or \code{mfma} on AMD GPUs.
Sliced layouts extract a dimension from their parent layout, used as the input to a broadcast or the output of a reduction.

\begin{figure}[tp]
    \centering
    \includegraphics[width=1.0\linewidth]{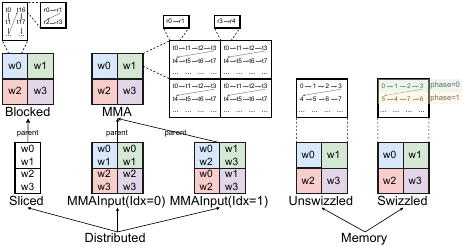}
    \caption{Legacy layouts in Triton. $w_{i}$ denotes warp $i$, $t_{j}$ denotes thread $j$, and $r_{k}$ denotes register $k$.}
    \label{fig:triton-layouts}
\end{figure}

The legacy Triton layout system requires each layout to define its own interface methods—such as the number of elements per thread and the number of contiguous elements. 
Moreover, indexing into tensor elements, as well as conversions between layouts, must be explicitly implemented for each layout.
This approach resulted in buggy layout constructions and conversions~\cite{triton_layout_bugs}.

\paragraph{Our Approach}
%\sloppy
In contrast, our approach defines layouts using a linear layout-based mechanism.  
For backward compatibility, we also provide utilities to define each legacy layout as a linear layout.  
Once a layout is defined using these utilities, interface methods such as \code{getNumElementsPerThread} no longer need to be reimplemented.  
With this approach, arbitrary layouts can be instantiated without modifying the core Triton compiler backend, including those for out-of-tree backends such as Intel GPUs.
Additionally, our approach automatically enables robust conversion between layouts and unifies the determination of hardware resources in code generation.

%This method is both well-defined and concise, requiring only about 50 lines of code per legacy layout.  

\section{Linear Layouts}
This section covers the definition of linear layouts, some fundamental linear layout operators, the creation of various Triton layouts as instances of linear layouts, and a general layout engine applied to Triton.
Proofs of propositions presented in this section are provided in the Appendix unless stated otherwise.

\subsection{A Motivating Example}\label{sec:ll_intro}
Most parameters in GPU programming are powers of two: a warp consists of $32$ or $64$ threads, a warp group contains $4$ warps, and matrix multiplication intrinsics (e.g., \code{mma} and \code{wgmma}) require tile dimensions of size $16\times n$, where $n \geq 1$.
Further, in Triton's programming model, the dimensions of tensors, as well as subdivisions of layouts associated with each tensor, such as the registers per thread and the number of threads, are restricted to powers of two.
In \cref{fig:example-layout}, layout A tiles a $16\times 16$ tensor using $2\times2$ registers, $4\times8$ threads, and $2\times1$ warps.

\begin{table}[tp]
\centering
\scriptsize
\caption{Some elements from the top-left corner of Layout A (Figure~\ref{fig:example-layout-1}). We show the mapping from matrix locations to register, thread, and warp in binary representations.}
\label{tab:mapping}
\begin{adjustbox}{max width=\columnwidth}
\begin{tabular}{@{}cccc@{}}
\toprule
\textbf{Location} & \textbf{Register} & \textbf{Thread} & \textbf{Warp} \\
\midrule
(0, 0) / (0b0, 0b0) & $r_{0}$ / 0b0 & $t_{0}$ / 0b0 & $w_{0}$ / 0b0\\
(0, 1) / (0b0, 0b1) & $r_{1}$ / 0b1 & $t_{0}$ / 0b0 & $w_{0}$ / 0b0 \\
(0, 2) / (0b0, 0b10) & $r_{0}$ / 0b0 & $t_{1}$ / 0b1 & $w_{0}$ / 0b0 \\
(0, 3) / (0b0, 0b11) & $r_{1}$ / 0b1 & $t_{1}$ / 0b1 & $w_{0}$ / 0b0 \\
\dots              & \dots   & \dots   \\
(1, 0) / (0b1, 0b0) & $r_{2}$ / 0b10 & $t_{0}$ / 0b0 & $w_{0}$ / 0b0 \\
(1, 1) / (0b1, 0b1) & $r_{3}$ / 0b11 & $t_{0}$ / 0b0 & $w_{0}$ / 0b0 \\
\dots              & \dots   & \dots   \\
(2, 2) / (0b10, 0b10) & $r_{0}$ / 0b0 & $t_{9}$ / 0b1001 & $w_{0}$ / 0b0 \\
(2, 3) / (0b10, 0b11) & $r_{1}$ / 0b1 & $t_{9}$ / 0b1001 & $w_{0}$ / 0b0 \\
\dots              & \dots   & \dots   \\
(3, 2) / (0b11, 0b10) & $r_{2}$ / 0b10 & $t_{9}$ / 0b1001 & $w_{0}$ / 0b0 \\
(3, 3) / (0b11, 0b11) & $r_{3}$ / 0b11 & $t_{9}$ / 0b1001 & $w_{0}$ / 0b0 \\
\dots              & \dots   & \dots   \\
\bottomrule
\end{tabular}
\end{adjustbox}
\end{table}

Because these quantities are powers of two, visualizing the distribution of elements in layout A (as shown in~\cref{tab:mapping}) is straightforward using the bit representation of their coordinates.
Register 0 ($r_{0}$) of all threads is located at coordinates $(i, j)$, where the last bits of both $i$ and $j$ are $0$.
For example, $r_{0}$ of thread $t_{1}$ is located at $(0, 2)=(0b0, 0b10)$.
For comparison, $r_{1}$ elements have coordinates where the last bit of $i$ is always $0$, while the last bit of $j$ is always $1$.
For example, $r_{1}$ of $t_{9}$ is located at $(2, 3)=(0b10, 0b11)$.
This is because each thread takes a $2\times2$ tile consecutively in the logical tensor.

More generally, we can consider three mapping functions that let us express every hardware index as a coordinate inside ever-larger tiles, including register $\rightarrow \text{loc}_{\text{thread}}$, thread  $\rightarrow \text{loc}_{\text{register}}$, and warp $\rightarrow \text{loc}_{\text{warp}}$.  
As an example, take register $r_{1}$ in thread $t_{9}$ of warp $w_{0}$.  
The register mapping places $r_{1}$ at $\text{loc}_{r_{1}} = (0, 1) = (0b0, 0b1)$, the second column of the $t_{9}$ tile.
The thread map situates that tile at $\text{loc}_{t_{9}} = (2, 2) = (0b10, 0b10)$ within the warp tile, and the warp map assigns to $\text{loc}_{w_{0}} = (0, 0)$. 
Bitwise XOR of these three coordinate pairs yields the register’s absolute position, $\text{loc}_{w_{0}} \oplus \text{loc}_{t_{9}} \oplus \text{loc}_{r_{1}}=(2, 3) = (0b10, 0b11)$.

Putting all this together, if we consider a vector $v$ of size $8$ represents an element of a thread in a warp, where the first $2$ bits $v_{0:1}$ represent the register ($\reg$), the next $5$ bits $v_{2:6}$ represent the thread ($\thread$), and the last bit $v_7$ represents the warp ($\warp$), we can define layout $A=\F^{8\times8}$.

\[
\footnotesize
A = \left[ \begin{array}{c|cc|ccccc|c}
  \multicolumn{1}{c}{ } & \multicolumn{2}{c}{\reg} & \multicolumn{5}{c}{\thread} & \multicolumn{1}{c}{\warp} \\
  \hline
  \multirow{4}{*}{\text{j}} & 1 & 0 & 0 & 0 & 0 & 0 & 0 & 0 \\
 & 0 & 0 & 1 & 0 & 0 & 0 & 0 & 0 \\
 & 0 & 0 & 0 & 1 & 0 & 0 & 0 & 0 \\
 & 0 & 0 & 0 & 0 & 1 & 0 & 0 & 0 \\
 \hline
\multirow{4}{*}{\text{i}} & 0 & 1 & 0 & 0 & 0 & 0 & 0 & 0 \\
 & 0 & 0 & 0 & 0 & 0 & 1 & 0 & 0 \\
 & 0 & 0 & 0 & 0 & 0 & 0 & 1 & 0 \\
 & 0 & 0 & 0 & 0 & 0 & 0 & 0 & 1
\end{array} \right]
\]

We can obtain $v$'s location $(i, j)$ in the tensor through $w = Av \in \mathbb{F}_2^8$, where $w_{0:3} = j$ and $w_{4:7} = i$, given that $j$ is the fastest moving dimension.

\paragraph{Labeled Vector Spaces}
We assign labels to each bit in the layout.
The input $v$ resides in $\F^2 \times \F^5 \times \F^1$, modeling the space of $\reg \times \thread \times \warp$.
The output $w$ follows an $\F^4 \times \F^4$ structure, representing the two dimensions of the logical tensor $(i, j)$.

To better understand the location calculation using matrix vector multiplication, consider register $r_{1}$ in thread $t_{9}$ of warp $w_{0}$, where $v_{\reg} = 0b01 = [1\,0]^{T}$\footnote{The least significant bits come first in the vector}, $v_{\thread} = 0b01001 = [1\,0\,0\,1\,0]^{T}$, and $v_{\warp} = 0b0 = [0]^{T}$.
Conducting $Av$ will XOR the bitwise product for each row of $A$ with $v$ and yield $w_{j} = [1\,1\,0\,0]^{T} = 0b0011=3$ and $w_{i} = [0\,1\,0\,0]^{T} = 0b0010 = 2$.

\subsection{Definition and Constructions}

% When operating on labeled vector spaces, we split them into their labels and we apply the operation grouping equal labels. For example:
% \[
% \pa{\F^{r_1}\times\F^{t_1}\times\F^{w_1}}\times\pa{\F^{r_2}\times\F^{t_2}\times\F^{w_2}} \iso
% \pa{\F^{r_1 + r_2}\times\F^{t_1 + t_2}\times\F^{w_1 + w_2}}.
% \]

%We can define our main object of study:
\begin{definition}[Linear Layouts]
    We define a \textbf{Linear Layout} as a linear map between (labeled) vector spaces over $\F$.
\end{definition}

For example, we can define layout $L$ as $\deffun{L: \reg \times \thread \times \warp -> \F^n \times \F^m;}$, and we denote each labeled subspace of $L$ using a subscript, such as $ L_{\reg}$.
In the next, we review basic linear algebra over $\F$ to construct specialized layouts.

\begin{definition}[Composition]
    Given vector spaces $U, V, W$ over $\F$ and linear layouts $\deffun{L_1 : U -> V;}$ and $\deffun{L_2 : V -> W;}$, we define their composition as
\[
    \deffun{L_2 \circ L_1 : U -> W;u -> L_2\pa{L_1\pa{u}}}
\]
Representing $L_{1}$ and $L_{2}$ as matrices $M_1$ and $M_2$, the matrix representing $L_2 \circ L_1$ is given by the (label-wise) matrix multiplication $M_2M_1$ over $\F$.
\end{definition}

\begin{definition}[Product]\label{def:product}
    Given two vector spaces $U, V$ over $\F$, we define their product as
\[
    U \times V = \set{(u, v) | u \in U, v \in V}.
\]
Given two linear layouts $\deffun{L_1 : U_1 -> V_1;}$, $\deffun{L_2 : U_2 -> V_2;}$, and $u_1 \in U_1, u_2 \in U_2$, we define their product \footnote{This construction is more often known as the direct sum of maps $L_1 \oplus L_2$. We choose to discuss it as the categorical product to avoid creating confusion with the notation for the $\operatorname{XOR}$.} as
\[
    \deffun{L_1 \times L_2 : U_1 \times U_2 -> V_1 \times V_2;(u_1, u_2) -> \pa{L_1\pa{u_1}, L_2\pa{u_2}}}
\]
Representing $L_{1}$ and $L_{2}$ as matrices $M_1$ and $M_2$, the matrix representing $L_1 \times L_2$ is given by the (label-wise) block-diagonal matrix
\[
\begin{bmatrix}
    M_1 & 0 \\
    0   & M_2
\end{bmatrix}.
\]
\end{definition}

\emph{Composition} and \emph{Product} operations are used to combine simple layouts into more complex ones.
For example, composition can extract a slice from the parent layout by mapping one of the parent dimensions to all zeros.
The product operation can be used to incrementally construct a complex layout, progressing from registers to threads to warps.
We also define the inverse operation of product (when it exists) in the following.

\begin{definition}[Left Division]\label{def:ldivision}
	A matrix $M$ is divisible on the left by a matrix $M_1$ if $M$ has the structure
\[
M = \begin{bmatrix}
    M_1 & 0 \\
    0   & M_2
\end{bmatrix}.
\]
We denote the division on the left as $M \ldiv M_1 = M_2$.
We handle this operation label-wise in a linear layout.
\end{definition}
Left division can be useful for determining whether a layout can be decomposed into smaller layouts that satisfy efficient hardware primitives, such as \code{ldmatrix}, as further discussed in \cref{sec:hw_primitives}.

\begin{definition}[Right Inverse]
    A surjective linear layout $\deffun{L : U -> V;}$ over $\F$ has a right inverse.

    If $M$ is a matrix representation of $L$ of shape $m \times n$ we define $M^{-1}$ as the $n \times m$ least squares solution of $MX = I_m$ where $I_m$ is the $m \times m$ identity matrix. In particular, it can be computed via Gaussian elimination over $\F$.
\end{definition}

Inversion is used when one needs to recover  hardware indices from coordinates in the logical tensor.

\subsection{Completeness}
We discussed the example in~\cref{sec:ll_intro} how layout A in~\cref{fig:example-layout} forms a linear layout.
We can easily generalize this family of layouts by using the concepts presented in the previous section.
This family of layouts is referred to as the \textbf{Blocked Layouts} in the legacy Triton layout system.

\begin{proposition}\label{prop:blocked}
    Blocked layouts are linear layouts.
\end{proposition}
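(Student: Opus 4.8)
The plan is to exhibit an explicit construction of an arbitrary blocked layout as a product of elementary linear layouts, and then invoke the closure of linear layouts under the \emph{Product} operation (Definition~\ref{def:product}) to conclude. A blocked layout on a $d$-dimensional tensor is specified by, for each axis $k$, a size of register tile $s^{\reg}_k = 2^{a_k}$, a size of thread tile $s^{\thread}_k = 2^{b_k}$, and a size of warp tile $s^{\warp}_k = 2^{c_k}$, together with an ordering of the axes that determines which dimension moves fastest. The idea is that such a layout is, by construction, nothing more than a repeated interleaving of ``take a contiguous power-of-two block along axis $k$'' sub-maps, and each such sub-map is literally a submatrix of a permutation/selection matrix over $\F$, hence linear.

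The key steps, in order: First, I would define the basic building block: for a single axis and a single hierarchy level, the map that sends the $e$ bits of a (sub)index to the corresponding $e$ coordinate bits of that axis is represented by an $e\times e$ identity block embedded into the appropriate rows/columns of a larger matrix over $\F$ — this is manifestly a linear map, indeed a $0/1$ matrix with at most one $1$ per row and column. Second, I would show that fixing one hardware resource type (say $\reg$) and ranging over all $d$ axes, the induced map $\reg \to \F^{n_1}\times\cdots\times\F^{n_d}$ is a product (in the sense of Definition~\ref{def:product}) of these single-axis blocks, possibly post-composed with a fixed permutation of the output coordinates that reflects the axis ordering; a coordinate permutation is linear over $\F$ (a permutation matrix), so composition (Definition~\ref{def:product}'s companion, \emph{Composition}) keeps us inside linear layouts. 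Third, I would assemble the full blocked layout $L : \reg\times\thread\times\warp \to \F^{n_1}\times\cdots\times\F^{n_d}$ as the product $L_{\reg}\times L_{\thread}\times L_{\warp}$ of the three per-resource maps, noting that the relative offsets at which register/thread/warp tiles sit within the larger tiles are themselves fixed bit-shifts, i.e.\ encoded by \emph{which} output rows the identity blocks land in — again linear. Finally, I would check that this construction reproduces exactly the worked example in~\cref{sec:ll_intro}: the matrix $A$ displayed there is precisely the block-diagonal-up-to-row-permutation matrix produced by the recipe, which both validates the construction and illustrates it.

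The main obstacle, and the part that needs the most care rather than the most cleverness, is the bookkeeping of \emph{where} each identity block lands — i.e.\ making precise the claim that the logical coordinate of an element decomposes as a bitwise XOR (equivalently, because the bit ranges are disjoint, a concatenation/sum) of the register-tile offset, the thread-tile offset, and the warp-tile offset along each axis, with no carries. This is exactly the ``powers of two $\Rightarrow$ disjoint bit fields'' phenomenon illustrated in~\cref{tab:mapping}, and the real content of the proof is to argue that the nested-tiling semantics of a blocked layout (each thread owns a contiguous $s^{\reg}$-block, threads are themselves tiled into contiguous $s^{\thread}$-blocks, etc.) forces the offsets to occupy disjoint, aligned bit positions, so that the composite map is $\F$-linear and not merely affine or piecewise. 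Once that disjointness is established, linearity is immediate from the elementary-block observation plus Definitions~\ref{def:product} and \emph{Composition}, and there is nothing left but to write down the block structure of the resulting matrix. I would therefore spend most of the proof on a careful statement of the blocked-layout semantics and the disjoint-bit-field lemma, and treat the matrix assembly itself as routine.
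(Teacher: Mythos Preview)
Your proposal is correct and follows essentially the same approach as the paper: both build the blocked layout as a product over axes of identity blocks $\id_k$ for each resource level $(\reg,\thread,\warp)$, assemble these per-resource maps via the Product construction, and handle the axis ordering by composing with an output permutation $\sigma_o$. The paper's proof is terser---it simply writes down $\sigma_o^{-1}\circ(\id^o_R\times\id^o_T\times\id^o_W)$ and notes it is a composition of linear maps---whereas you plan to spell out the disjoint-bit-field bookkeeping explicitly, but the mathematical content is identical.
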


Blocked layouts are one kind of \textbf{Distributed Layouts} in Triton, which is referred to as any layout that is used to describe distribution on registers, threads, and warps.
We label their dimensions as $\reg, \thread, \warp$.
Other commonly used distributed layouts are the ones associated with matrix multiplication operations like \code{mma} and \code{wgmma} operations on NVIDIA GPUs.
Similarly, it is possible to constructively show that layouts for AMD and Intel's matrix multiplication intrinsics exist.
We refer to the input and output of these instructions as the family of \textbf{MMA Layouts}.

\begin{proposition}
    The input and output layouts of \code{mma} and \code{wgmma} are linear layouts.
\end{proposition}

The last distributed layout is the family of \textbf{Sliced Layouts} defined as the result of applying a reduction operation (\code{tt.sum}, \code{tt.min}\textellipsis) along a dimension.
\begin{proposition}
    The slice of a linear layout is a linear layout
\end{proposition}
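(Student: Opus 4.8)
The plan is to realize a slice as post-composition with a coordinate projection, so that the result is a linear layout by the composition rule already established (composition of linear layouts is linear, with matrix the product of the matrices). Concretely, let $L$ be a distributed linear layout whose output space is decomposed label-wise as $\F^{d_0} \times \cdots \times \F^{d_k}$, one factor per logical tensor dimension, and suppose we reduce (equivalently, slice) along dimension $j$. Define the projection $\pi_j \colon \F^{d_0} \times \cdots \times \F^{d_k} \to \prod_{i \neq j} \F^{d_i}$ that forgets the $j$-th coordinate block and leaves the others untouched. The sliced layout is then $\pi_j \circ L$: a hardware resource that $L$ sends to $(x_0, \ldots, x_k)$ is sent by the slice to $(x_0, \ldots, x_{j-1}, x_{j+1}, \ldots, x_k)$, precisely its location in the reduced tensor.

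First I would check that $\pi_j$ is $\F$-linear, which is immediate: its matrix $P_j$ is obtained from the identity on $\F^{d_0} \times \cdots \times \F^{d_k}$ by deleting the $d_j$ rows belonging to block $j$, so it is a genuine matrix over $\F$ of shape $(n - d_j) \times n$ with $n = \sum_i d_i$, and it visibly preserves addition and scalar multiplication. Hence $\pi_j$ is itself a linear layout, and by composition $\pi_j \circ L$ is a linear layout with matrix $P_j M$, where $M$ represents $L$; since $P_j M$ is just $M$ with its block-$j$ output rows removed, this step is purely bookkeeping. I would also note that this construction automatically encodes the broadcast semantics expected of a reduction result: any input bit (register, thread, or warp) that $L$ used solely to index dimension $j$ corresponds to a column of $M$ supported entirely in block $j$, which becomes a zero column of $P_j M$, so resources differing only in those bits collapse to the same location in the reduced tensor — they all hold the reduced value. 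Surjectivity onto the reduced tensor is inherited as well, since $\pi_j$ is surjective; and the reverse direction (broadcasting from the sliced layout back to the parent) is handled by the same machinery as a composition with a section of $\pi_j$.

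The step I expect to require the most care is definitional rather than computational: pinning down that ``taking the slice along dimension $j$'' really is post-composition with $\pi_j$ (acting on the output/tensor side, not the input/hardware side), and confirming that the many-to-one nature of a reduction is harmless because linear layouts are arbitrary $\F$-linear maps with no injectivity requirement. Once that identification is made, the proposition is an immediate corollary of the composition construction.
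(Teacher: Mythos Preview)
Your argument is correct and is essentially the paper's one-line proof (``removing a dimension is a linear map'') spelled out in detail: you identify the slice as post-composition with the coordinate projection $\pi_j$ and invoke closure under composition. Your additional observations about zero columns and surjectivity also match the paper's follow-up remark verbatim in spirit.
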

\begin{proof}
    Removing a dimension is a linear map.
\end{proof}

\textbf{Remark}.
    When representing the layout as a matrix, a sliced layout removes some rows of it.
    As such, the resulting layout may not be injective (some of its columns may be zero), but it will be surjective.

\begin{theorem}
    Every distributed layout is a linear layout.
\end{theorem}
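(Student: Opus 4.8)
The plan is to argue by cases over the taxonomy of distributed layouts depicted in \cref{fig:triton-layouts}, using the preceding propositions as the base cases and the Composition and Product constructions as the closure principles. Recall that a distributed layout is, by definition, a map assigning hardware indices in $\reg \times \thread \times \warp$ to coordinates of the logical tensor, and that \cref{fig:triton-layouts} classifies every such layout in Triton as Blocked, MMA (in its several hardware variants), MMA Input, or Sliced. The goal is to show each of these constructors yields a linear layout and that the class of linear layouts is closed under the ways these constructors are combined.

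First the base cases. Blocked layouts are linear layouts by \cref{prop:blocked}. The MMA and MMA Input layouts for \code{mma} and \code{wgmma} are linear layouts by the preceding proposition; for the remaining matrix-multiplication intrinsics (Blackwell's tensor-core layouts, AMD's \code{mfma}, Intel's XMX) one gives the analogous explicit matrix, observing that each such layout is a fixed permutation of coordinate bits, possibly composed with broadcasting, on tile dimensions that are powers of two — so exactly the same argument applies verbatim. For the inductive step, a Sliced layout is obtained from a parent distributed layout by deleting a dimension, which is composition with a linear projection, hence linear by the slice proposition above. More generally, whenever Triton assembles a distributed layout by taking a product of its per-level pieces (registers, then threads, then warps) or by composing/reindexing an existing layout, linearity is preserved by the Composition construction and by the Product construction of \cref{def:product}; thus the class of linear layouts contains all base layouts and is closed under every operation used to build compound distributed layouts, so it contains all of them.

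The main obstacle is not a deep computation but rather pinning down exhaustiveness and treating the compound cases uniformly: one must verify that \cref{fig:triton-layouts} is genuinely a complete enumeration — in particular that every MMA-input/dot-operand family across the supported data types and $k$-widths, and every parent/child pairing (e.g.\ a slice of an MMA layout, or an MMA-input layout derived from a given MMA layout), is accounted for — and that each such composite is literally expressible through the Composition, Product, and slicing operations already shown to preserve linearity. Once the taxonomy is fixed, the theorem follows by assembling the already-established pieces; the statement is best read as a "completeness" corollary certifying that the linear-layout formalism subsumes the entire legacy distributed-layout hierarchy.
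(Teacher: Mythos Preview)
Your proposal is correct and mirrors the paper's own (largely implicit) argument: the theorem is presented as the culmination of the preceding case-by-case propositions on Blocked, MMA/MMA-input, and Sliced layouts, together with the remark that the analogous construction goes through for the other vendors' matrix intrinsics. Your framing of it as a ``completeness corollary'' assembled from those pieces via Composition/Product is exactly how the paper treats it; the paper simply states the theorem without writing out the case enumeration you spell out.
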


We can now establish the following formal definition of distributed layouts using linear layouts.

\begin{definition}[Distributed Layout]~\label{thm:distr}
    A distributed layout in Triton is a surjective linear layout from registers, threads, and warps into a logical tensor where each column of the associated matrix has at most one non-zero bit, and no two non-zero columns are repeated.
\end{definition}

In other words, a distributed layout is a permutation matrix that may have additional zero columns interleaved.
This characterization is notably significant, as now we have fully translated into linear algebra and code what previously was specified as informal definitions.

The other family of layouts in Triton is \textbf{Memory Layouts}.
A memory layout is a way to distribute a logical tensor on a programmable segment of memory (\eg, shared memory, tensor memory, \etc).
We model it as a map from memory offsets $\offset$ to coordinates in the logical tensor.
The simplest memory layout is \textbf{Unswizzled Layouts}, which maps memory offsets directly to a logical tensor.
That is, a memory location $(i, j)$ corresponds to the coordinates $(i, j)$ in the logical tensor. 
%which places the first $2^k$ elements from the first thread in the first $2^k$ memory locations, the next $2^k$ elements from the second thread in the following $2^k$ locations, and so on. This structure is a blocked layout in disguise.
%Consequently, accessing elements in memory using a blocked layout into an unswizzled layout is efficient---provided that each thread writes into a different bank.
However, when using unswizzled layouts to read from or write to certain distributed layouts, such as those in the MMA family, performance degrades due to \textbf{bank conflicts}.
To address this issue, \textbf{\code{mma} swizzling} was introduced, enabling fast memory access when reading from or writing to MMA layouts.

\begin{definition}[\code{mma} swizzling]\label{def:swizzling}
    Given parameters $\text{vec} > 0$, $\text{per\_phase}, \text{max\_phase} \geq 0$, all of them being powers of two, we define \code{mma} swizzling as a mapping from each element's location $(i, j)$ to its offset
    \[
\pa[\big]{\pa[\big]{\tfrac{i}{\text{per\_phase}} \bmod\text{max\_phase}} \oplus \tfrac{j}{\text{vec}}} \cdot \text{vec} \oplus \pa{j\bmod\text{vec}}.
    \]
    where $\cdot$ denotes multiplication over $\code{uint64}$ and $\oplus$ denotes $\operatorname{XOR}$, and the offsets are counted in elements.
\end{definition}

We can now prove the following:

\begin{proposition}
    \code{MMA} swizzled layouts are linear layouts.
\end{proposition}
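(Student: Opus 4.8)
The plan is to reduce the statement to a single claim: that the formula in Definition~\ref{def:swizzling}, read as a transformation of the \emph{bits} of the coordinate pair $(i,j)$ into the bits of the offset, is $\F$-linear. Once this is shown, the formula is by definition a linear layout (between the labeled space of logical coordinates and the labeled $\offset$ space), and the \code{mma} swizzled memory layout — which packages this formula together with the manifestly $\F$-linear contribution of the remaining index bits to the offset — is a linear layout as well, using closure of linear layouts under composition and product; if one specifically wants the $\offset \to (i,j)$ direction, it follows from closure under the right inverse, since the map is surjective.

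First I would fix the powers of two, writing $\text{vec} = 2^{v}$, $\text{per\_phase} = 2^{p}$, $\text{max\_phase} = 2^{q}$ with $v,p,q \geq 0$ integers. Then I would record the elementary observation that, once an integer is identified with its bit-vector in $\F^{n}$ for $n$ large enough, each of the following operations is $\F$-linear: (i) integer division by $2^{k}$ (drop the $k$ lowest bits), (ii) reduction modulo $2^{k}$ (keep only the $k$ lowest bits), (iii) multiplication by $2^{k}$ (shift the bits up by $k$, padding with zeros below), and (iv) bitwise $\oplus$, which is exactly addition in $\F^{n}$. Operations (i)--(iii) are represented by $0/1$ matrices — an identity block placed at the appropriate position — and (iv) is $\F$-addition of branch outputs. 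Since composition of linear layouts corresponds to matrix multiplication over $\F$, and sums of $\F$-linear maps are $\F$-linear, any finite nesting of these operations is again $\F$-linear, represented by a binary matrix.

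Then I would read the swizzling formula as such a nesting. From $(i,j)$ one forms $a := \pa{i / 2^{p}} \bmod 2^{q}$, which depends $\F$-linearly on the bits of $i$; $b := j / 2^{v}$ and $c := j \bmod 2^{v}$, which depend $\F$-linearly on the bits of $j$; and the offset is $o = 2^{v}\cdot\pa{a \oplus b} \oplus c$. The main obstacle — and the only place that needs real care — is that ordinary integer addition is \emph{not} $\F$-linear because of carries, so I must verify that every $\oplus$ in the formula is genuinely carry-free and hence coincides with $\F$-addition. For $a \oplus b$ this is automatic (bitwise XOR never carries). For the outer $\oplus$ it holds because $2^{v}\cdot\pa{a \oplus b}$ has its $v$ lowest bits equal to zero — precisely the bits occupied by $c = j \bmod 2^{v}$ — so the two summands have disjoint support and their XOR is just a concatenation. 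Hence $o$ is an $\F$-linear image of the bits of $(i,j)$, i.e. it is represented by a matrix over $\F$, which is what the proposition asserts; the remaining work is routine bookkeeping with shifted identity blocks and an appeal to the closure properties above.
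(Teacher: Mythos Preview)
Your approach is correct and is essentially the same as the paper's: both arguments observe that every operation in the swizzling formula (floor-division by a power of two, reduction modulo a power of two, multiplication by a power of two, and bitwise $\oplus$) acts $\F$-linearly on the bit-vectors of $i$ and $j$, so the whole map is $\F$-linear; the paper compresses this into a single sentence and then notes bijectivity to obtain the inverse direction, while you spell the decomposition out explicitly. Your extra care about carries is harmless but unnecessary here, since the paper defines $\oplus$ to be XOR outright rather than integer addition---so there is nothing to verify on that front, though your disjoint-support observation does recover the concatenation structure that the paper exploits when writing down the block-triangular matrix form.
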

\begin{proof}
    The operations involved are linear on the bits of $i, j$, so the map is linear. It is clear that it is injective and surjective, so it has an inverse and its inverse defines a linear layout from coordinates in the logical tensor to $\offset$.
\end{proof}

Computing the inverse of the map above reveals that the matrix representation of the linear layout associated to \code{mma} swizzling for a tensor of size $2^m\times2^n$ has the structure:
\[
\begin{bmatrix}
    \I_n & C \\
    0   & \I_m
\end{bmatrix}.
\]
where $\I_{m}$ and $\I_{n}$ denote identity matrices of size $m$ and $n$ accordingly.
Each row $c_i$ in $C$ is given by
\[
c_i = \pa{\text{vec} \cdot \pa{\tfrac{2^i}{\text{per\_phase}} \bmod \text{max\_phase}}} \bmod 2^n.
\]

Similar computations for other swizzling strategies yield:
\begin{theorem}
    Every memory layout is a linear layout.
\end{theorem}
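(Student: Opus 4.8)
The plan is to argue that every swizzling strategy used in Triton's memory layout family is, like \code{mma} swizzling, built out of bit-level operations that are linear over $\F$, and that each such map is a bijection on the offset space, so its inverse also defines a linear layout. Concretely, a memory layout is a map from logical-tensor coordinates $(i,j,\dots)$ of a $2^{m_1}\times 2^{m_2}\times\cdots$ tensor to a memory offset in $\F^{m_1+m_2+\cdots}$, and I would enumerate the swizzling strategies Triton supports (unswizzled, \code{mma} swizzling of Definition~\ref{def:swizzling}, and the remaining shared-memory swizzling variants alluded to just before the theorem). For each one, the defining formula is an expression in $\oplus$ (XOR), integer multiplication by a power of two (a bit shift), integer division by a power of two (a bit shift), and reduction mod a power of two (bit truncation); every one of these acts linearly on the bit-vectors of the inputs, so the composite map is represented by a matrix over $\F$. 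This is the same reasoning already used in the proof that \code{MMA} swizzled layouts are linear layouts, now applied uniformly across the finite list of strategies.

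The key steps, in order, are: (1) fix the bit labelling of the coordinate space and the offset space, exactly as in the \code{mma} case, so that "linear over $\F$" has a precise meaning; (2) observe the closure fact — shifts, truncations, and XOR are all $\F$-linear, and a composition/concatenation of $\F$-linear maps is $\F$-linear — which reduces the theorem to checking that each strategy's formula is assembled only from these primitives; (3) go strategy by strategy, exhibiting the matrix (for \code{mma} swizzling the excerpt already gives the block form $\left[\begin{smallmatrix}\I_n & C\\ 0 & \I_m\end{smallmatrix}\right]$ with the stated rows $c_i$; for the others, the "similar computations" yield analogous block-triangular matrices, which I would record); (4) verify that each such matrix is invertible over $\F$ — block-triangular with identity diagonal blocks, hence unipotent, hence invertible — so the map is a bijection and its inverse is again $\F$-linear; (5) conclude that in either direction (coordinates $\to$ offset, or offset $\to$ coordinates) we have a linear layout in the sense of the Definition, and note that the unswizzled layout is the identity matrix, the trivial case.

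I would close by remarking that this argument is genuinely a finite case analysis: it establishes the claim for the swizzling families Triton actually implements, rather than for some abstract universal notion of "memory layout." The main obstacle is therefore not mathematical depth but bookkeeping — getting the bit orderings, the fastest-moving-dimension convention, and the per-phase/max-phase/vec parameters right for each variant so that the exhibited matrix genuinely matches the formula, and making sure the list of strategies is exhaustive for the system under discussion. Once the closure lemma in step (2) is stated cleanly, each individual strategy collapses to a one-line verification, so the bulk of the work is organizational rather than conceptual.
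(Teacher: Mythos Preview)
Your proposal is correct and matches the paper's approach exactly: the paper's entire justification is the single clause ``Similar computations for other swizzling strategies yield,'' and you have spelled out precisely what those similar computations are --- a finite case analysis over Triton's swizzling variants, each built from XOR, power-of-two shifts, and truncations, hence $\F$-linear and invertible. Your steps (2)--(4) are in fact a more explicit version of the one-line argument the paper gives for the \code{mma} case and then waves at for the rest.
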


We can now formally define the family of memory layouts.
\begin{definition}[Memory Layout]\label{def:memory}
    A memory layout in Triton is an invertible linear layout where the columns of the associated matrix have either $1$ or $2$ non-zero bits.
\end{definition}

We will discuss in~\cref{sec:conversion} how to compute optimal memory layouts to maximize read and write performance for arbitrary distributed layouts.

\subsection{Closure Under Triton Operations}

Triton's operations fall into four categories: (1) computation, (2) memory (global, shared, tensor, etc.), (3) layout conversion, and (4) shape operations.
In the previous section, we discussed how linear layouts allow us to handle the first two categories.
In this section, we explore how linear layouts enable the propagation of layouts through shape operations and facilitate the movement of elements from one layout to another using layout conversion operations, leveraging a generic layout engine.

\paragraph{Triton's Layout Engine}
Initially, Triton assigns blocked layouts to global memory operations and to computation operations that require specific input layouts, such as \code{mma} or \code{wgmma} (exposed via \code{tt.dot}).
We refer to these as \emph{anchor} layouts.
The propagation phase consists of a \emph{forward} pass and a \emph{backward} pass.
During the forward pass, layouts are propagated along use chains, merging candidate layouts at operations with multiple inputs. Conflicts are resolved using a heuristic model (\eg, favoring blocked layouts for load/store operations).
At this stage, layout conversions are inserted to standardize values with multiple candidate layouts.
In the backward pass, layout conversions are rematerialized in reverse through the definition chain. If the instructions along the chain are inexpensive, the entire operation chain may be rematerialized to eliminate layout conversions.

%The same holds when flattening the logical tensor dimensions of an \code{mma} layout.
\paragraph{Propagation Through Shape Operations}
Consider the shape operations in Triton, including \code{tt.trans}, \code{tt.reshape}, \code{tt.join}, \code{tt.split}, \code{tt.expand\_dims}, and \code{tt.broadcast}.
For every input (resp.\ output) distributed layout, there exists an output (resp.\ input) layout from the same family such that the operation effectively becomes a no-op, which is inexpensive.
We prove in the appendix that the family of distributed layouts, as defined in~\cref{thm:distr}, is forward (resp.\ backward) closed under these operations.
Note that the family in~\cref{thm:distr} contains strictly more layouts than legacy layouts.
For example, legacy layouts cannot represent the transpose of an MMA layout, whereas the characterization in~\cref{thm:distr} clearly includes it.
Consequently, with legacy layouts, it was not possible to propagate layouts for some of the operations, leading to unnecessary layout conversions (\ie, additional data movement).
Linear layouts allow this engine to be as generic as possible, enabling optimizations as sophisticated as those in~\cref{sec:mxfp} to be implemented directly in the Python frontend at zero runtime cost.

\section{Code Generation}\label{sec:codegen}

Linear layouts provide a structured foundation for developing algorithms at both the language frontend and the compiler backend.
This section discusses key examples.

\subsection{Layout Utilities}\label{sec:utilities}
Without linear layouts, Triton's layout properties were informally defined and implemented on a case-by-case basis, leading to subtle errors and suboptimal code.
Below, we highlight two cases where linear layouts simplify this process and enhance the robustness of code generation.

\paragraph{Contiguous elements}\label{par:contig}
Computing the number of contiguous elements per thread is essential for vectorization when loading/storing tensor elements from/to global memory.
Previously, Triton heuristically identified the fastest-running dimension, assuming it determined contiguous elements.
However, when a dimension contained only one element, such as the last dimension in a tensor shape of $[128,1]$, Triton disables vectorization.

Enabling vectorization for all layouts on a case-by-case basis required extensive manual effort and was difficult to verify.
With linear layouts, this computation becomes straightforward.
It reduces to finding the largest contiguous block in the logical tensor that is mapped via the identity map onto registers by the inverse of the layout.
Given a linear layout $L$, we find the largest $u$ that has $L^{-1}_{\reg}(i)= i$, for any $i \leq u$.
%which  corresponds to identifying columns in the layout matrix that form an identity mapping from registers onto the first output bits.

\paragraph{Broadcasting}\label{sec:broadcasting}
Legacy layouts, such as blocked and MMA layouts, are defined by an initial \textbf{tile} that distributes data across registers, threads, and warps.
If the tile is smaller than the associated tensor, it is replicated to cover the entire tensor, increasing register usage per thread.
Conversely, if the tile is larger, the tensor is replicated to cover the tile, meaning threads and warps can hold duplicated data in registers.
Handling this behavior in LLVM code generation, particularly for reduction and scan operations, is complex, as determining which threads hold duplicated data in an arbitrary layout is nontrivial.
This has been a persistent source of bugs in Triton over the past few years~\cite{triton-issue-3017,triton-issue-4310,triton-issue-4362}.
Linear layouts significantly simplify this process.
Tiling operations are translated to the \emph{Product} operation (\cref{def:product}).
Once a linear layout is established, identifying threads and warps with duplicated data reduces to detecting zero columns in the layout matrix.
For example, adding a zero column in $A_{reg}$ defined in \cref{sec:ll_intro} means that registers 4-7 map to the same tensor elements as registers 0-3.

\subsection{Mixed-Precision Matrix Multiplication}\label{sec:mxfp}

Using low-precision data types in DL models is proven to maintain the same level of accuracy while improving performance~\cite{wang2018training,tsengtraining}, and it is often used in scenarios where usually one operand is of higher precision while another is of lower precision.
We now discuss how linear layouts make mixed-precision matrix multiplication robust and efficient.

\paragraph{Software Emulation}

New-generation GPUs, such as the NVIDIA B200 and AMD MI350x, provide native hardware support for matrix multiplication, such as MXFP4~\cite{OCP-MX-Spec-2023}, which is a quantized type where each 32 floating-point elements share a single 8-bit exponent (i.e., \emph{scale}).
Given the limited availability of such hardware at the time of writing, Triton needs to support software emulation on existing architectures.
For example, when performing $\code{mxfp4} \times \code{bf16}$, we upcast \code{mxfp4} to \code{bf16}.
Each set of 8 threads in a warp (\ie, each row of the \code{mma} layout) shares the same scale.
Achieving this functionality with legacy layouts would require implementing a new layout along with conversion operations across all distributed and memory layouts.
Alternatively, one could load exponents in a blocked layout and share them via warp shuffles, but at the cost of suboptimal performance.

\sloppy
Linear layouts provide a better solution.
By defining shape transformations (\ie, \code{tt.reshape}, \code{tt.transpose}, and \code{tt.broadcast}) for scale broadcasting, the layout engine automatically determines the correct layout for loading scales, while generic shared memory loads handle the rest.
This approach is also exposed at the Python API level, providing higher flexibility.

\paragraph{Data Shuffling}\label{par:data-shuffling}
Loading low-precision data and then upcasting before invoking Tensor Core instructions often results in inefficiencies.
For example, when performing $\code{mxfp4} \times \code{bf16}$, the \code{mxfp4} data cannot be loaded using vectorized instructions since the corresponding \code{wgmma} instructions require two registers per thread for each row in the operands.
To optimize performance, we can pre-shuffle the higher-precision tensor operand (\code{bf16}) in HBM before computation to enable wider vectorization for the lower-precision tensor operand (\code{mxfp4}). \footnote{Similar optimizations can be applied to \code{mma} without pre-shuffle since it accepts both operands on registers}
The Machete framework~\cite{Wilkinson2024Machete} implemented this solution using several thousand lines of code and a heavy CUTLASS~\cite{nvidia_cute} dependency.
With linear layouts, this optimization can be achieved at the language level in just five lines of Python using shape operations.

%On Hopper, \code{mxfp4} $\times$ \code{bf16} performance remains suboptimal due to hardware limitations that prevent vectorized shared memory loads.

\subsection{Using SIMD Hardware Primitives}\label{sec:hw_primitives}
SIMD instructions are fundamental to modern hardware for improving data throughput.
We have discussed vectorized global memory operations and \code{mma}/\code{wgmma} operations in \cref{sec:utilities}, both of which require tensors to follow specific layouts that are constructed from small \textbf{tile}s compatible with SIMD instructions.
In this section, we discuss using efficient SIMD instructions to map one layout to another.
\begin{theorem}
    Given a layout $L$, an instruction with tile $T$ can lower it if $L \ldiv T$ exists.
\end{theorem}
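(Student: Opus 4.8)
The statement says the hardware instruction attached to the tile $T$ is by itself enough to realize $L$ (in the intended use, $L$ is a layout conversion we want to lower and $T$ is the conversion performed by a primitive such as \code{ldmatrix} or a warp shuffle). The plan is to turn the hypothesis ``$L \ldiv T$ exists'' into an explicit factorization of $L$, and then read a correct lowering schedule directly off that factorization.

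First I would unfold \cref{def:ldivision}: by definition $L \ldiv T$ exists exactly when the (label-wise) matrix of $L$ has the block-diagonal shape $\begin{bmatrix} T & 0 \\ 0 & M_2\end{bmatrix}$ with $M_2 := L \ldiv T$. Applying \cref{def:product} in the reverse direction, a linear layout with a block-diagonal matrix of this form is precisely the product $L = T \times M_2$: the input bits (across the $\reg,\thread,\warp$ labels) partition into the bits consumed by $T$ and the bits consumed by $M_2$, the output (logical-tensor) bits partition likewise, $T$ acts only within the first block and $M_2$ only within the second, and no bit crosses between the two.

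Given this factorization the lowering is immediate. The instruction, by construction, computes exactly the map $T$ on its tile of hardware resources while leaving everything else in place. Because $L = T\times M_2$, issuing that instruction once for each fixed value of the $M_2$-input coordinates touches only the $T$-coordinates and leaves the $M_2$-coordinates untouched; since the two coordinate groups are complementary and the blocks act independently, the union over all $M_2$-settings reproduces $L$ on its entire domain. The outer loop ranges over a basis of $M_2$, which is well defined because $M_2$ is itself a linear layout, and surjectivity of $L$ — guaranteed for distributed layouts (\cref{thm:distr}) and strengthened to invertibility for memory layouts (\cref{def:memory}) — makes the enumeration a genuine bijective cover, so the emitted schedule is both complete and non-redundant.

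The main obstacle is reconciling the \emph{label-wise}, purely matrix notion of block-diagonality in \cref{def:ldivision} with the actual precondition of the instruction: the primitive does not merely need $T$ to appear \emph{somewhere} inside $L$, it needs $T$ to sit on the exact registers, threads, and warps that the instruction reads from and writes to. So the substantive step is to show that when the division is computed label by label, the surviving copy of $T$ occupies a canonical sub-coordinate of each labeled subspace that matches the instruction's interface, and that any remaining relabeling needed to expose it is free (a permutation of register indices, which costs nothing, in the spirit of \cref{par:contig}). Proving that ``$L\ldiv T$ exists'' already implies this hardware-placement condition — rather than just the algebraic one — is where the care lies; the product bookkeeping above is then routine.
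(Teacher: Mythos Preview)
Your first three paragraphs are exactly the paper's argument, just spelled out: the paper's proof is the one-liner ``It follows from the definition of the tile $T$ and left division (cf., \cref{def:ldivision}),'' and your unfolding of $L\ldiv T$ into the block-diagonal form and then into the product $L = T\times M_2$ is precisely that.

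The final paragraph, however, manufactures an obstacle that is not there. Left division (\cref{def:ldivision}) is not ``$T$ appears somewhere as a block of $L$''; it is the specific requirement that $T$ sits in the \emph{top-left} corner label-wise, i.e., on the low-order bits of each of $\reg,\thread,\warp$. Those low-order bits are exactly the ones a SIMD primitive addresses, so the hardware-placement condition you worry about is already the content of ``$L\ldiv T$ exists.'' The paper makes this explicit in the very next paragraph (Generalized Vectorization): when $T$ is \emph{not} in that canonical position one permutes registers first, and only then does the division succeed. So there is nothing further to prove; you can drop that paragraph entirely.
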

\begin{proof}
    It follows from the definition of the tile $T$ and left division (cf., \cref{def:ldivision}).
\end{proof}

\paragraph{Shared Memory Load and Store}
Mapping registers from a distributed layout to the corresponding MMA swizzled layout using SIMD instructions can enable fast shared memory loads and stores.
Performing this mapping generically is challenging in the legacy Triton layout system, as it requires a unique implementation for each layout pair and only supports a subset of layouts, often leading to errors or even silent failures in complex programs.

Linear layouts offer an elegant, generic solution.
Given a memory layout represented by an invertible matrix $A$ (\cf~\cref{def:memory}) that maps offsets to the logical tensor, and a distributed layout $B$ that maps registers, threads, and warps to the same space, the required mapping reduces to computing $L = A^{-1}\circ B$.
Once $L$ is determined, we can assess whether certain SIMD instructions are compatible with the layout by constructing a corresponding tile $T$ and $L\ldiv T$ exist.

\textit{Vectorized \code{ld.shared}/\code{st.shared}}. The tile for vectorized shared memory instructions of size $2^n$ bits (typically $32$, $64$, or $128$) is given by the identity mapping from registers to memory offsets of size $n \times n$.

\textit{\code{ldmatrix}/\code{stmatrix}}. These instructions require each thread to handle $4$ contiguous bytes, with $8$ groups of $4$ threads collaborating to store a row each. For an element type of byte width $w$, the tile is given by $\id^{\reg,\offset}_k \times \id^{\thread,\offset}_2$, for $k = \log_2 \tfrac{4}{w}$ where $\id_k$ is the $k \times k$ identity matrix.

%\paragraph{\code{Matrix Transpose}}. Let $L$ be the \code{ldmatrix} tile for $b = 16$, mapping elements to a row (dimension 1) of the logical matrix. Define $M = L \times \id^{\thread, 0}_3$, mapping the remaining $8$ threads as matrix columns. Let $T$ be the linear transformation that transposes an $8 \times 8$ matrix. The tile for \code{movmatrix} is given by $M^{-1} T M$.

\paragraph{Generalized Vectorization}
If the layout $L$ does not have the structure to be divided by $T$, we can adjust it by permuting the registers.
For example, if the layout is column-major, vectorization would not be directly possible.
Instead, we define $L' = P_{\reg} L$, where $P_{\reg}$ permutes the registers.
Since the division algorithm processes the columns of $L$ and $T$ sequentially, we can determine $P_{\reg}$ while computing the division.
%Similar tiles can be defined for \code{cp.async}, \code{cp.async.bulk}, \code{ldmatrix.x4}, \code{ldmatrix.trans}, etc.~\mario{amd?}
%Given a tile definition, left division tells us whether a given instruction can be used and how to generate it.

\subsection{Optimal Codegen for Layout Conversions}\label{sec:conversion}
Given distributed layouts $A$ and $B$, we can convert the tensor/hardware resource mapping from $A$ to $B$. Treating $A$ and $B$ as representing vectors in $\F^d$ (flattening the logical tensor $\F^{d_1}\times \dots \times \F^{d_r} \iso \F^d$), we define the sets $L_\reg, L_\thread, L_\warp$ as the columns of a distributed layout $L$ that act on registers, threads, and warps. By~\cref{thm:distr}, these elements are distinct powers of two or zeros.

The conversion is given by $B^{-1}\circ A$.
While $B$ need not be invertible, it is surjective as it represents the entire logical tensor, so a right inverse exists.
We select $B^{-1}\circ A$ to satisfy:
\begin{enumerate}
    \item \textbf{Minimizing inter-warp or inter-thread data movement:} If $A_i = B_i$, then $\pa{B^{-1}\circ A}_i$ is the identity for $i \in \set{\reg, \thread, \warp}$.
    \item \textbf{Promoting broadcasting:} 
    The linear system $BX = A$ can have multiple solutions, such as when $B$ is a distributed layout where the same tensor element is broadcast across registers. 
    To pick a unique one, we set the slack variables in the linear system to zeros to produce a solution $X$ whose Hamming weight~\cite{wiki-hamming-weight}---the number of 1-bits in $X$---is minimal.
    Intuitively, we make all the elements pointing to the same value in the logical tensor read from the same input execution unit.
\end{enumerate}

\paragraph{Intra-thread Data Exchange}
$\pa{B^{-1}\circ A}_\reg$ is the register permutation needed to transform $A$ into $B$.

\begin{figure}[t]
    \centering
    \includegraphics[width=1.0\linewidth]{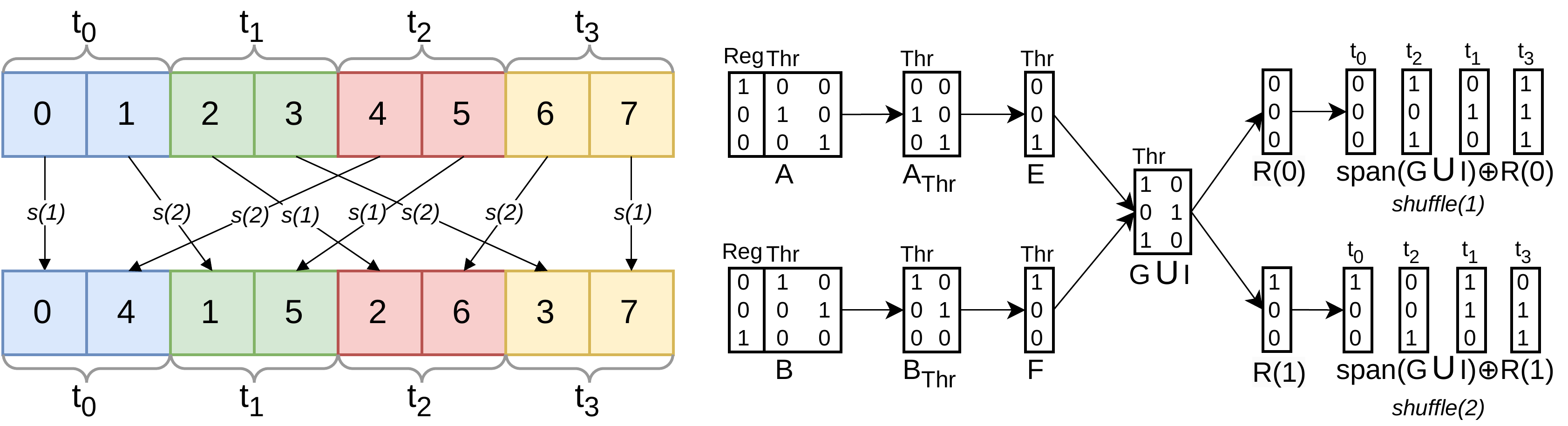}
    \caption{A step-by-step illustration of layout conversion through warp shuffles. $t_{i}$ denotes thread $i$. $s(1)$ and $s(2)$ denote the first and the second shuffle round, respectively. For simplicity, we illustrate with four threads only without loss of generality to any warps containing the power of 2 threads.}
    \label{fig:convert-warp-shuffle}
\end{figure}

\paragraph{Intra-warp Data Exchange}
If $(B^{-1}\circ A)_\warp$ is the identity, data exchange can be performed using warp shuffles.
For simplicity, assume there is no broadcasting in $A$ or $B$.
We divide the process into two steps:

\textbf{1. Determining the vectorization size}
   The number of bytes that can be exchanged per warp shuffle depends on the vectorization of $\pa{B^{-1}\circ A}_\reg$.
   Specifically, if $n = \card{A_\reg \cap B_\reg}$, then each warp shuffle can transfer up to $2^n$ elements. Let $V \subseteq A_\reg \cap B_\reg$ be the largest subset that can be exchanged in a single warp instruction---typically $32$ bits on NVIDIA and AMD hardware.

\textbf{2. Tiling and exchanging elements}
Since we are exchanging elements defined by the basis vectors of $V$, we must tile the complement of the subspace $\sspan\pa{V}$. Each shuffle operation enables a thread to send and receive $2^{\card{V}}$ elements. To determine which elements should be exchanged, define
   \[
   I = A_\thread \cap B_\thread \qquad
   E = A_\thread \backslash I \qquad
   F = B_\thread \backslash I
    \]
   $I$ contains the vectors in both $A_\thread$ and $B_\thread$, which do not have to perform data exchange.
   Then, we can take these vectors out of $A_\thread$ and $B_\thread$ to obtain $E$ and $F$.
   Since there is no broadcasting, we have that $\card{E} = \card{F}$. After choosing an ordering (e.g., ascending order) for $E$ and $F$, we define $G$ as
   \[
    G = \set{e_i \oplus f_i \mid e_i \in E, f_i \in F, 1 \leq i \leq \card{E}}.
   \]
   $G$ is a basis of the subspace such that each element of this subspace belongs to a different thread of $A$ and a different thread of $B$.
   $V \cup I \cup G$ forms a basis of the subspace containing elements that will participate in the first shuffle round.

   Since we have to tile the complement of subspace $\sspan(V)$, we extend the basis $V \cup I \cup G$ to a basis of the whole space $\F^d$.
   We call this extension $R$, and we see $R$ as a mapping from $0 \dots 2^{\card{R}} - 1$ to $\F^d$. Then, for each $i$, the affine space $R(i) \oplus \sspan\pa{V \cup I \cup G}$ contains exactly one vectorized element per thread in layouts $A$ and $B$, so we can exchange the elements in $2^{\card{R}}$ rounds, shuffling the elements in each round.

    \Cref{fig:convert-warp-shuffle} demonstrates an example that uses warp shuffles. 
    $V$ is empty in this case.
    Next, to complete the space $\F^{3}$, we define $R(0) = [0, 0, 0]^{T}$ and $R(1) = [0, 1, 0]^{T}$, and get $\sspan(G)$.
    Because $\sspan$ of a set of vectors is the set of all linear combinations of the vectors in this set, and $V$ and $I$ are empty, we have $\sspan(V \cup I \cup G) = \{[0\,0\,0]^{T}, [1\,0\,1]^{T}, [0\,1\,0]^{T}, [1\,1\,1]^{T}\}$.
    The result of $R(i) \oplus \sspan\pa{G}$ represents the location of elements that will be involved in shuffle round $i$.
    In each round, every thread sends and receives only one element.
    %Specifically, in the first round, $t_{0}$ sends the 0th element, $t_{1}$ sends the 3rd element, $t_{2}$ sends the 5th element, and $t_{3}$ sends the 6th element.
    %The rest elements are sent in the second round.

\paragraph{Optimal Swizzling}

\begin{figure*}
    \centering
    \includegraphics[width=0.8\linewidth]{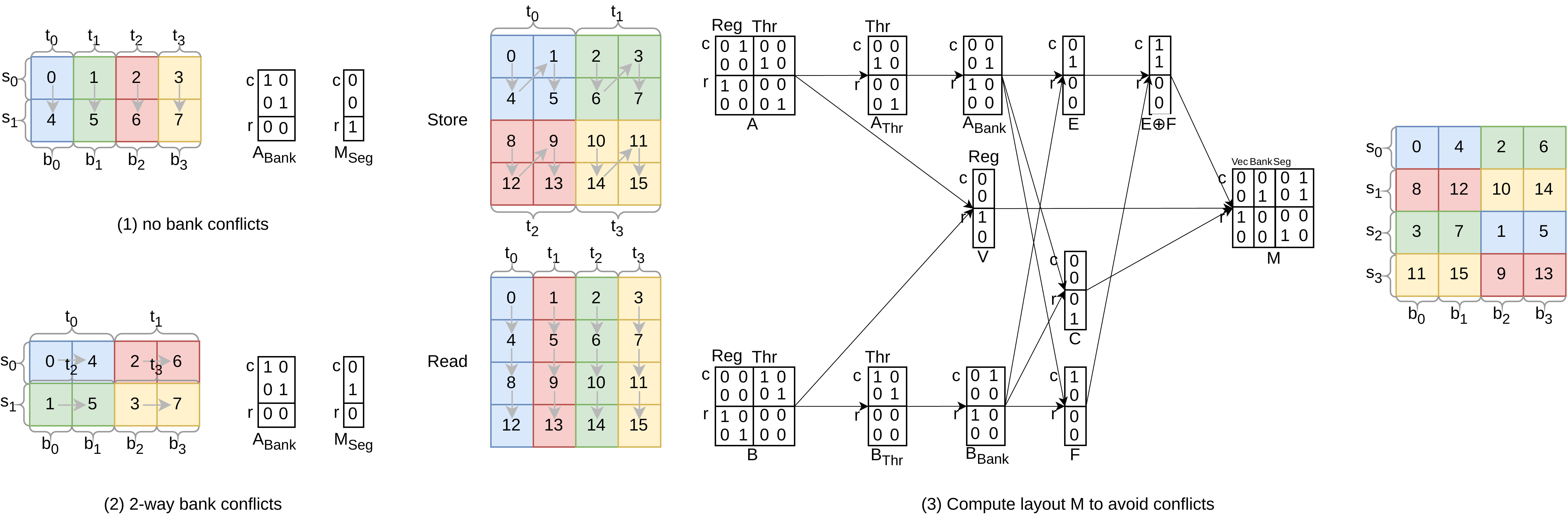}
    \caption{Bank conflicts and swizzling. $t_{i}$ denotes thread $i$, $s_{j}$ denotes segment $j$, and $b_{k}$ denotes bank $k$. $c$ and $r$ denotes row and column index. 
    For simplicity, we illustrate with four threads, four segments, and each segment contains four banks.}
    \label{fig:convert-swizzle}
\end{figure*}

We now present an algorithm that computes an optimal swizzled layout that maximizes read/write vectorization while minimizing bank conflicts for arbitrary linear layouts. 
%The full version, including the proof of correctness, is included in~\cref{sec:opt_swiz}.

We represent the shared memory layout as a map
\[
    \deffun{M : \F^v \times \F^b \times \F^s -> \F^d;},
\]
where $s = d - v - b$. The first space $\vect$ corresponds to vectorization, the second space $\bank$ represents memory banks with each segment, and the third space $\seg$ corresponds to segment index.

To minimize bank conflicts, each bank belonging to distinct segments should be accessed by distinct threads.
Let $P = \sspan(M_{\vect} \cup A_{\thread})$, we aim to identify the largest subspace $H$ such that $P \cap \sspan(H) = \{0\}$.
If $P$ overlaps with $\sspan(H)$, it implies that at least two threads may access the same bank on different segments.
In the worst case, if the segment space $\sspan(H)$ equals $P$, all threads will access the same bank across different segments.
Figure~\ref{fig:convert-swizzle} (2) has 2-way conflicts because $t_{0}$ and $t_{2}$ access the same bank as well as $t_{1}$ and $t_{3}$.

In the following, we describe the bank conflict optimization algorithm when two layouts, $A$ and $B$.
We first define the vectorization set $V$ of size $2^v$ by choosing a basis of $A_\reg \cap B_\reg$ as done for warp shuffles.
For a data type with byte width $w$, let $b$ be the logarithm of the number of vectorized elements needed to cover all the shared memory banks.
On modern GPUs, this is $b = \log_2 \tfrac{128}{2^vw}$.
On NVIDIA GPUs, if vectorization modifiers (e.g., \texttt{.v4}) are used, transactions involving more than 128 bytes will be split into multiple 128-byte transactions, so we generate two new layouts $A_{\bank}$ and $B_{\bank}$ by taking out the last few $\log_2 \max(1, \frac{2^{v}w}{4})$ vectors from $A_{\thread}$ and $B_{\thread}$ respectively.
Next, we define 

Define
\[
    P = \sspan\pa{S_{\vect} \cup A_{\bank}} \cup \sspan\pa{S_{\vect} \cup B_{\bank}}.
\]
To minimize bank conflicts, we are interested in finding the largest subspace $H$ such that $P \cap \sspan\pa{H} = \set{0}$.
We define
\[
E = A_\bank \backslash B_\bank, \quad F = B_\bank \backslash A_\bank.
\]
Without loss of generality, assume that $\card{E} \leq \card{F}$.
We then enumerate their elements following a chosen order and construct
\[
    H = \set{e_i \oplus f_i | e_i \in E, f_i \in F, 1 \leq i \leq \card{E}}.
\]
Next, we construct a basis $C$ as a complementing subspace of $P$ and determine the columns of $M_{\seg}$ as follows:
\begin{itemize}
\item If $\card{H} + \card{C} \geq s$, we select $s$ vectors from $H \cup C$.
\item If $\card{H} + \card{C} < s$, bank conflicts are unavoidable. We add the remaining $s - \card{H} - \card{C}$ vectors from $A_\bank$.
\end{itemize}

Finally, we choose $S_\bank$ by completing the columns of $M$ into a basis of $\F^d$ similar to the warp shuffling process. 
$M$ is the swizzled layout that minimizes read and write bank conflicts provided maximal vectorization.
We demonstrate the workflow of this algorithm in Figure~\ref{fig:convert-swizzle}.
Reads and writes are split into four transactions without bank conflicts.
For example, for memory reads in Figure~\ref{fig:convert-swizzle} (3), in the first transaction, $t_{0}$ reads 0 ($b_{0}$) and 4 ($b_{1}$), and $t_{1}$ reads 1 ($b_{2}$) and 5 ($b_{3}$). In the second transaction, $t_{2}$ reads 2 ($b_{2}$) and 6 ($b_{3}$), and $t_{3}$ reads 3 ($b_{0}$) and 7 ($b_{1}$).

\subsection{Optimized Codegen for Gather}\label{subsec:gather}
The \code{tl.gather} operator extracts specific elements from a source tensor (\code{src}) along a given axis (\code{axis}) using indices from the \code{index} tensor.
If all elements along the \code{axis} dimension of \code{src} and \code{index} reside within the same warp, we can optimize the operation using warp shuffles.
This is determined by checking whether all elements of $L_{\warp}^{\code{axis}}$ are zero, where $L$ is the layout of both \code{src} and \code{index}.

To exchange elements between threads, for each position, $pos$, along \code{axis}, we first read $\texttt{index}(pos)$ to obtain the location of the source and use $L(\texttt{index}(pos))_{\reg}$ and $L(\texttt{index}(pos))_{\thread}$ to identify the register and thread that holds the source value.
Then, we perform $n$ rounds of warp shuffles, $n = 2^{\card{L_{\thread}^{\code{axis}}}}$.
In each round, a thread sends its $i$-th value and receives a value from the source thread $L(\texttt{index}(pos))_{\thread}$.
The received value is stored only if $i = L(\texttt{index}(i))_{\reg}$.

\section{Evaluation}\label{sec:evaluation}
We compared our optimized version of Triton, which integrates linear layout-based optimizations (Triton-Linear), with the baseline Triton that does not incorporate these optimizations.  
The key differences between Triton and Triton-Linear are as follows:
\begin{itemize}
    \item Triton uses legacy data layouts, which do not support utilities for arbitrary distributed layouts or  conversions between them, making it prone to bugs.
    \item Triton does not incorporate optimized code generation as described in \cref{sec:codegen}.
    For example, layout conversions always go through shared memory, with limited use of efficient hardware primitives.
\end{itemize}

In the following, we first compare the test pass rate and performance between Triton and Triton-Linear using synthetic micro-benchmarks.
The running time is obtained by repeating each benchmark 10 times and reporting the median value.
Next, we compare the performance of the two versions using individual kernels in TritonBench~\cite{tritonbench}, with the running time reported by TritonBench's reporting system.  
We evaluated the performance on three distinct platforms, as detailed in \cref{tab:platforms}.

\begin{table}[t]
\centering
\scriptsize
\caption{Hardware Platforms Evaluated}
\label{tab:platforms}
\begin{tabular}{lccc}
\toprule
\textbf{Platform} & \textbf{GPU Model} & \textbf{Memory} & \textbf{Notes} \\
\midrule
RTX4090   & NVIDIA RTX4090 & 24GB GDDR6X & Consumer GPU \\
GH200  & NVIDIA GH200  & 80GB HBM2e  & Data center GPU \\
MI250 & AMD MI250   & 64GB HBM2   & Data center GPU \\
\bottomrule
\end{tabular}
\end{table}

\subsection{Micro-Benchmarks}

\paragraph{Hyperparameters} All microbenchmarks, except for Mixed Precision Matmul, are executed using four warps and a single CTA.
The Mixed Precision Matmul benchmark uses four warps per CTA, with the number of CTAs varying based on the input size.

\paragraph{Load/Store Contiguity}
We synthesized a benchmark that loads and stores tensors of varying sizes in the last dimension with different data types.
The pass rates of Triton and Triton-Linear are shown in Table~\ref{tab:micro-load-store}.
We observe that Triton, using legacy layouts, fails to identify the maximum number of contiguous elements when they span multiple dimensions, even though each thread can access these elements contiguously. 
In contrast, linear layouts enable identifying the maximum number of contiguous elements across dimensions, resulting in up to a $7\times$ increase in the bitwidth accessed by load/store instructions.

\begin{table}[tp]
\tiny
\centering
\renewcommand{\arraystretch}{1.2} % Increase row spacing
\caption{Comparison of load/store instructions and bitwidths across different shapes and data types.}
\begin{tabular}{|c|cc|cc|}
\hline
\rowcolor[HTML]{C0C0C0} 
& \multicolumn{2}{c|}{\textbf{Load/Store Inst}} & \multicolumn{2}{c|}{\textbf{Bitwidth}} \\ \hline
\rowcolor[HTML]{C0C0C0} 
\textbf{Tensor Type} & \textbf{Triton} & \textbf{Triton-Linear} & \textbf{Triton} & \textbf{Triton-Linear} \\ \hline
$[512,1] \times f8$  & \multicolumn{1}{c|}{v1.b32}  & v1.b32  & \multicolumn{1}{c|}{32}   & 32   \\ \hline
$[512,2] \times f8$  & \multicolumn{1}{c|}{v1.b16}  & v4.b32  & \multicolumn{1}{c|}{16}   & \cellcolor[HTML]{9AFF99}128 ($\uparrow$ 700\%)  \\ \hline
$[512,4] \times f8$  & \multicolumn{1}{c|}{v1.b32}  & v4.b32  & \multicolumn{1}{c|}{32}   & \cellcolor[HTML]{9AFF99}128 ($\uparrow$ 400\%)  \\ \hline
$[512,8] \times f8$  & \multicolumn{1}{c|}{v2.b32}  & v4.b32  & \multicolumn{1}{c|}{64}  & \cellcolor[HTML]{9AFF99}128 ($\uparrow$ 100\%)  \\ \hline
$[512,16] \times f8$ & \multicolumn{1}{c|}{v4.b32}  & v4.b32  & \multicolumn{1}{c|}{128}  & 128  \\ \hline
$[512,1] \times f16$  & \multicolumn{1}{c|}{v2.b32}  & v2.b32  & \multicolumn{1}{c|}{64}   & 64   \\ \hline
$[512,2] \times f16$  & \multicolumn{1}{c|}{v1.b32}  & v4.b32  & \multicolumn{1}{c|}{32}   & \cellcolor[HTML]{9AFF99}128 ($\uparrow$ 300\%)  \\ \hline
$[512,4] \times f16$  & \multicolumn{1}{c|}{v2.b32}  & v4.b32  & \multicolumn{1}{c|}{64}   & \cellcolor[HTML]{9AFF99}128 ($\uparrow$ 100\%)  \\ \hline
$[512,8] \times f16$  & \multicolumn{1}{c|}{v4.b32}  & v4.b32  & \multicolumn{1}{c|}{128}  & 128  \\ \hline
$[512,16] \times f16$ & \multicolumn{1}{c|}{v4.b32}  & v4.b32  & \multicolumn{1}{c|}{128}  & 128  \\ \hline
\end{tabular}
\label{tab:micro-load-store}
\end{table}

\paragraph{Broadcasting}

\begin{table}[tp]
\tiny
\centering
\renewcommand{\arraystretch}{1.2} % increase row spacing
\caption{Comparison of layout support and the number of shared memory instructions.}
\begin{tabular}{|c|cc|cc|}
\hline
\rowcolor[HTML]{C0C0C0} 
& \multicolumn{2}{c|}{\textbf{Pass Rate}} & \multicolumn{2}{c|}{\textbf{\#Shared Memory Insts}} \\ \hline
\rowcolor[HTML]{C0C0C0} 
\textbf{Layout} & \textbf{Triton} & \textbf{Triton-Linear} & \textbf{Triton} & \textbf{Triton-Linear} \\ \hline
Blocked    & \multicolumn{1}{c|}{20/20} & 20/20  & \multicolumn{1}{c|}{5888}  & \cellcolor[HTML]{9AFF99}1388 ($\downarrow$ 76\%) \\ \hline
MMA        & \multicolumn{1}{c|}{20/20} & 20/20  & \multicolumn{1}{c|}{5914}  & \cellcolor[HTML]{9AFF99}3517 ($\downarrow$ 40\%) \\ \hline
MMA Input        & \multicolumn{1}{c|}{0/10}  & \cellcolor[HTML]{9AFF99}10/10 & \multicolumn{1}{c|}{N/A}   & \cellcolor[HTML]{9AFF99}5884 \\ \hline
Sliced\textless{}Blocked\textgreater{} & \multicolumn{1}{c|}{20/20} & 20/20  & \multicolumn{1}{c|}{6703}  & \cellcolor[HTML]{9AFF99}4687 ($\downarrow$ 30\%) \\ \hline
Sliced\textless{}MMA\textgreater{}     & \multicolumn{1}{c|}{0/10}  & \cellcolor[HTML]{9AFF99}10/10 & \multicolumn{1}{c|}{N/A}   & \cellcolor[HTML]{9AFF99}320 \\ \hline
Sliced\textless{}MMA Input\textgreater{}     & \multicolumn{1}{c|}{0/10}  & \cellcolor[HTML]{9AFF99}10/10 & \multicolumn{1}{c|}{N/A}   & \cellcolor[HTML]{9AFF99}545 \\ \hline
Custom     & \multicolumn{1}{c|}{0/10}  & \cellcolor[HTML]{9AFF99}10/10 & \multicolumn{1}{c|}{N/A}   & \cellcolor[HTML]{9AFF99}913 \\ \hline
\end{tabular}
\label{tab:micro-broadcasting}
\end{table}

As discussed in \cref{sec:broadcasting}, using linear layouts, we can correctly identify threads and warps with duplicated data, helping to avoid redundant load and store instructions.  
We designed a micro-benchmark to enumerate the most common layouts in Triton and applied a reduction operation across tensors with the following shapes: \([128, 16]\), \([128, 128]\), \([32, 128]\), \([32, 32]\), and \([16, 16]\).
Experiment results in \cref{tab:micro-broadcasting} demonstrate that Triton-Linear not only supports reduction operations across all layout combinations but also reduces the number of shared memory store instructions by up to 76\%.

\paragraph{Mixed Precision Matmul}

\begin{table}[tp]
\tiny
\centering
\renewcommand{\arraystretch}{1.2} % Increase row spacing
\caption{Pass rate comparison for different data type pairs.}
\begin{tabular}{|c|c|c||c|c|c|}
\hline
\rowcolor[HTML]{C0C0C0} 
 & \multicolumn{2}{c||}{\textbf{Pass Rate}} & & \multicolumn{2}{c|}{\textbf{Pass Rate}} \\ \hline
\rowcolor[HTML]{C0C0C0} 
 \textbf{Data Type} & \textbf{Triton} & \textbf{Triton-Linear} & \textbf{Data Type}  & \textbf{Triton} & \textbf{Triton-Linear} \\ \hline
i16/f16  & 32/64  &\cellcolor[HTML]{9AFF99}64/64  & i16/f32  & 32/32  & 32/32  \\ \hline
i16/f64  & 32/32  & 32/32  & i16/f8   & 36/96  & 96/96\cellcolor[HTML]{9AFF99}  \\ \hline
i32/f16  & 32/32  & 32/32  & i32/f64  & 16/32  & 32/32\cellcolor[HTML]{9AFF99}  \\ \hline
i32/f8   & 18/48  &\cellcolor[HTML]{9AFF99} 48/48  & i64/f16  & 32/32  & 32/32  \\ \hline
i64/f32  & 16/32  &\cellcolor[HTML]{9AFF99} 32/32  & i64/f8   & 18/48  &\cellcolor[HTML]{9AFF99} 48/48  \\ \hline
i8/f16   & 36/96  &\cellcolor[HTML]{9AFF99} 96/96  & i8/f32   & 18/48  &\cellcolor[HTML]{9AFF99} 48/48  \\ \hline
i8/f64   & 18/48  &\cellcolor[HTML]{9AFF99} 48/48  & i8/f8    & 30/144 &\cellcolor[HTML]{9AFF99} 144/144 \\ \hline
\end{tabular}
\label{tab:micro-mixed-precision}
\end{table}

We built two micro-benchmarks to compare Triton-Linear with Triton for mixed-precision matrix multiplications.
First, we enumerated all common tensor data types used in Triton in pairs, testing the correctness of a simple matrix multiplication kernel across different shapes.  
As shown in \cref{tab:micro-mixed-precision}, we observe that Triton fails in many cases, achieving an overall pass rate of only 46.6\% out of the total 784 cases, whereas Triton-Linear successfully passes all test cases.  
The main reason behind this is that Triton does not correctly implement matrix multiplication for small shapes and low-precision data types.  
In fact, Triton does not support any MMA layouts with more than 32-bit consecutive elements in the last dimension of the tile.
In contrast, linear layouts provide a solid foundation for code generation, ensuring support for all valid distributed layouts in matrix multiplication.

The second micro-benchmark we constructed evaluates the performance gains achieved using the data shuffling optimization described in \cref{par:data-shuffling}.  
We fixed one operand as \code{mxfp4} while varying the precision of the other operand. 
As shown in \cref{fig:micro-mxfp4}, Triton-Linear consistently outperforms Triton across different tensor shapes and data types due to the higher throughput enabled by vectorized shared memory instructions.  
Notably, the $\text{mxfp4} \times \text{f16}$ series of experiments shows a higher speedup (1.87$\times$), as we also addressed an issue where Triton did not utilize \code{wgmma} for \code{f16} in mixed-precision cases.

\begin{figure}[tp]
    \centering
    \includegraphics[width=0.9\linewidth]{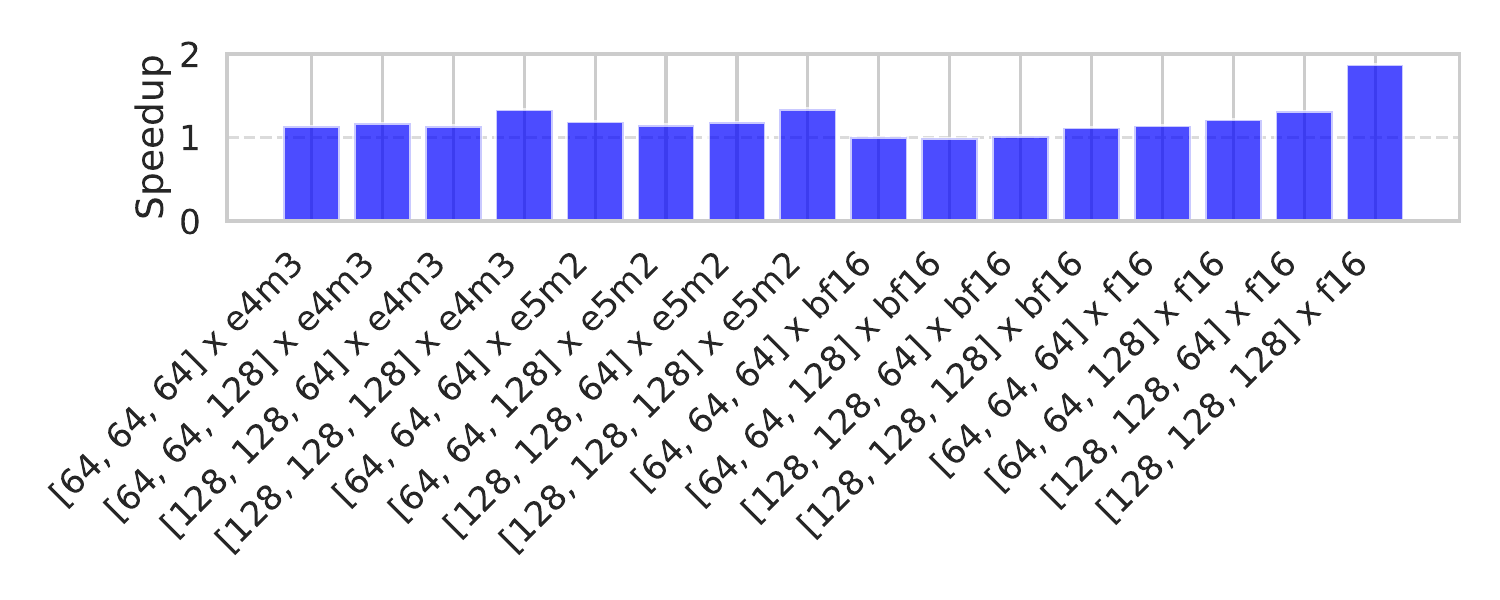}
    \caption{Speedups of MXFP4 matrix multiplications across different shapes and data types on GH200.}
    \label{fig:micro-mxfp4}
\end{figure}

\begin{figure}
    \centering
    \includegraphics[width=0.9\linewidth]{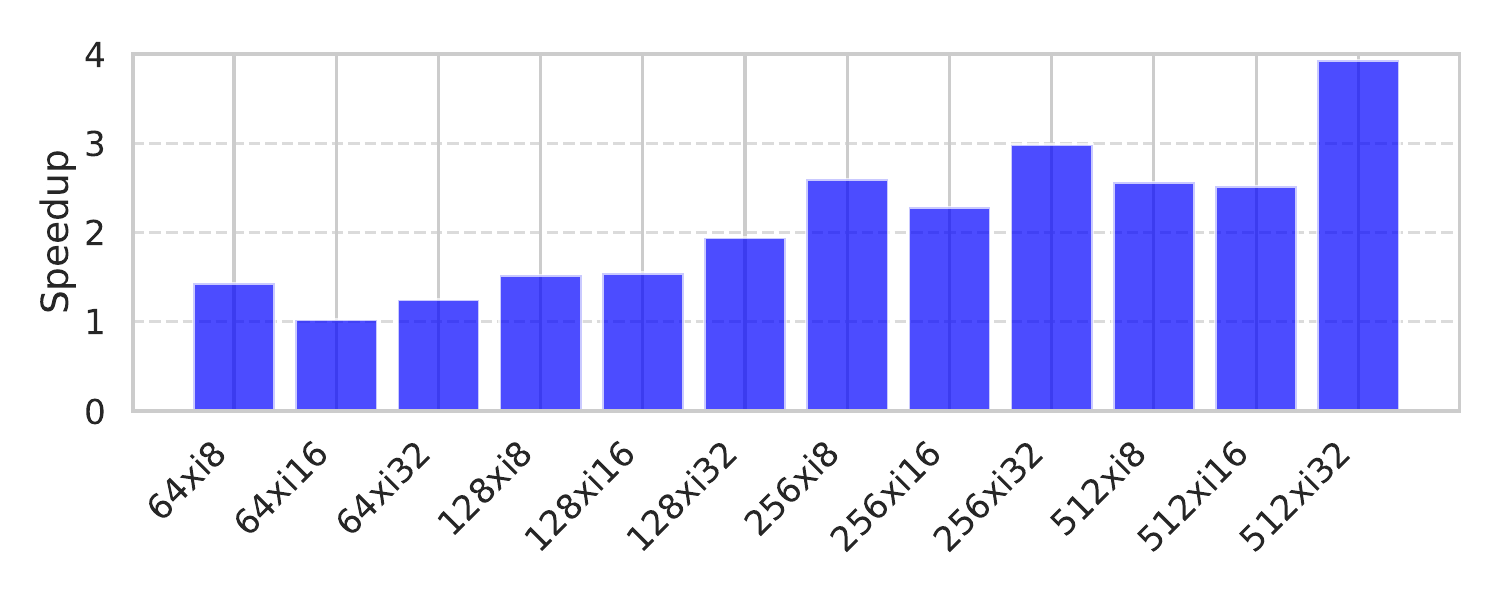}
    \caption{Speedups of layout conversions across different shapes and data types on GH200. }
    \label{fig:convert-layout}
\end{figure}

\begin{figure}[tp]
    \centering
    \includegraphics[width=0.9\linewidth]{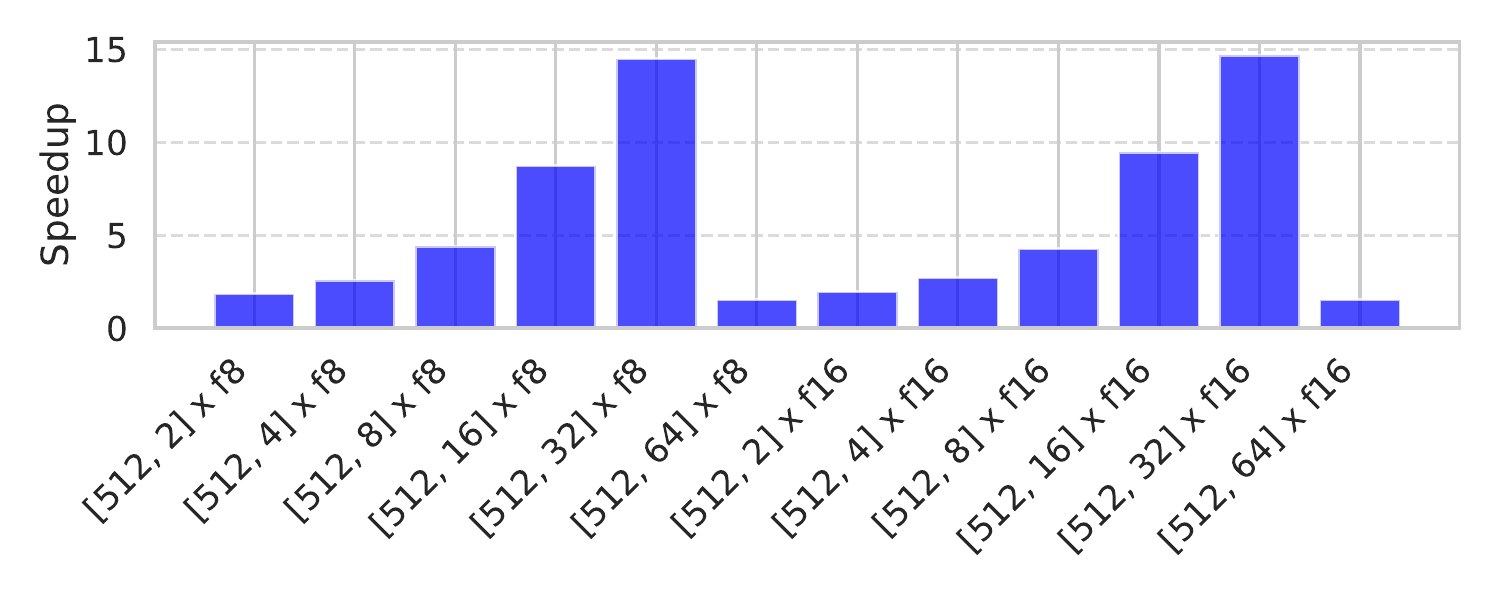}
    \caption{Speedups of the gather operator across different shapes and data types on GH200.}
    \label{fig:micro-gather}
\end{figure}

\paragraph{Layout Conversion} We compared the performance of Triton and Triton-Linear when warp shuffles are used for layout conversions.  
Our benchmark evaluated tensors of varying sizes and data types.  
As shown in \cref{fig:convert-layout}, Triton-Linear consistently outperforms Triton, which always uses shared memory-based layout conversion, achieving speedups of up to 3.93$\times$.

\paragraph{Gather} We evaluated the performance improvement of the \code{gather} operator when warp shuffles are used, comparing it to Triton's implementation, which always uses shared memory. 
\cref{fig:micro-gather} shows that Triton-Linear achieves a maximum speedup of 14.20$\times$ over Triton.  
Interestingly, as the gathered dimension increases, the speedup drops after a certain point (e.g., $[512, 32]$), because the overhead of emitting multiple rounds of warp shuffles outweighs the benefits of eliminating shared memory accesses.

\subsection{Real Benchmarks}

\begin{figure*}[tp]
    \centering
    \includegraphics[width=1.0\linewidth]{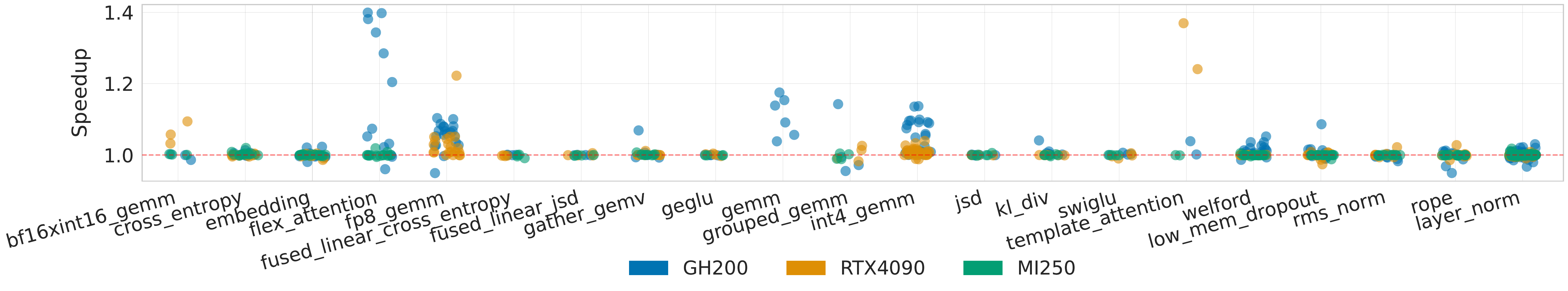}
    \caption{Speedups of real benchmarks on RTX4090, GH200, and MI250.}
    \label{fig:real-bench}
\end{figure*}
We ran 21 benchmarks in TritonBench on three different platforms to compare the performance of Triton with that of Triton-Linear.
We show the performance gain of Triton-Linear on three platforms in \cref{fig:real-bench}.
Because each benchmark has multiple inputs, \textbf{totaling 265 cases}, we use circles to indicate the speedup of each case. 
Note that benchmarks are not all available on each platform due to hardware limitations.
For example, some benchmarks require large shared memory available only on GH200, while several kernels use tensor descriptors that rely on TMA engines~\cite{nvidia_ptx_isa}, which are absent on both RTX4090 and MI250.
In addition, speedups lower than 1.0 are mostly caused by runtime noise in benchmarks when small inputs are used.

On GH200, we achieved speedups ranging from 0.96$\times$ to 1.40$\times$.
The benchmarks with the most significant speedups are \emph{int4\_gemm}, \emph{gemm}, and \emph{flex\_attention}. 
We observe that efficient hardware primitives, such as \code{ldmatrix} and \code{stmatrix}, are widely utilized in layout conversion and shared memory load and store operations within these kernels.
For \emph{welfrod}, Triton-Linear is able to detect the conversion between ``equivalent'' layouts, allowing the conversion to be lowered to a no-op.  
These optimizations are not possible in the legacy layout system, as it cannot directly compare layouts of different kinds (e.g., Blocked and Sliced layouts).
We plot \autoref{tab:operation-analysis} to show the distribution of \emph{convert\_layout}, \emph{local\_load}, and \emph{local\_store} operations in Triton's GPU IR and confirm that the benefits of Linear Layouts come from optimizing the cost associated with these operations.

\begin{table}[ht]
    \centering
    \scriptsize
    \caption{Distribution of local (shared) memory and convert layout operations in each benchmark. Benchmarks with no relevant operations are omitted.}
    \begin{tabular}{lrrrr}
        \hline
        Operation & \#Load & \#Store & \#Convert \\
        \hline
        gemm & 76 & 18 & 54 \\
        bf16xint16\_gemm & 22 & 14 & 32 \\
        int4gemm & 9 & 3 & 6 \\
        template\_attention & 2 & 4 & 2 \\
        fp8\_gemm & 4 & 0 & 16 \\
        welford & 0 & 0 & 8 \\
        gather\_gemv & 0 & 0 & 8 \\
        grouped\_gemm & 0 & 0 & 4 \\
        rope & 0 & 0 & 2 \\
        embedding & 0 & 0 & 1 \\
        \hline
    \end{tabular}
    \label{tab:operation-analysis}
\end{table}

On RTX4090, we achieved speedups from 0.97$\times$ to 1.37$\times$.
We achieved a higher speedup on \code{template\_attention} due to the difference between \code{mma} (RTX4090) and \code{wgmma} (GH200) instructions.
In this case, a \code{tt.dot} operation has the left operand defined outside of the loop, repeatedly loading data from the same address, thus both \code{ldmatrix} and regular shared memory instructions can achieve high throughput.
While the right operand is updated in each iteration, \code{wgmma} accesses it directly in the shared memory, only on RTX4090 it will be lowered into \code{ldmatrix} after our optimizations. 
As a result, the achieved speedup on GH200 is comparatively lower.
On MI250, we achieved a speedup from 1.00$\times$ to 1.03$\times$.
In general, Triton-Linear achieves lower speedups on AMD GPUs than NVIDIA GPUs for the lack of efficient hardware primitives such as \code{ldmatrix}.

% \begin{figure}[!tp]
%     \centering
%     \includegraphics[width=1.0\linewidth]{}
%     \caption{Speedups of real benchmarks on RTX4090.}
%     \label{fig:real-bench-4090}
% \end{figure}

% \begin{figure}[!tp]
%     \centering
%     \includegraphics[width=1.0\linewidth]{}
%     \caption{Speedups of real benchmarks on MI250.}
%     \label{fig:real-bench-mi250}
% \end{figure}

\section{Related Work}

\paragraph{DL Compilers}
%Various compilers have been developed to enhance the performance of deep learning workloads across different architectures.  
Many DL compilers~\cite{tvm_relax,Glow2018,XLA2020,TASO2019,Ansor2020,BladeDISC2023} focus on end-to-end optimizations, including operator fusion, graph transformations, and tiling-based lowering, for improved speed and memory efficiency.  
While these compilers simplify development, determining optimal optimization policies for the entire computation graph remains challenging.  
Recently, finer-grained programming models and compilers~\cite{tvm_relax,tillet2019triton,paszke2024mosaic,thunderkittens2024} have emerged, enabling users to customize deep learning operators at the tile level.
Kernels generated by these compilers often achieve higher performance compared to those produced by end-to-end compilers, due to their greater flexibility and specialized optimizations.

\paragraph{Hardware Resource Mapping}
A large body of work~\cite{ragan2013halide, zheng2020flextensor, lee2020protuner, chen2021alt, yan2021rammer, Ansor2020, hidet202X, graphene202X, amos202X} studied the layout mapping between hardware resources and logical tensors.
However, these studies have not examined the efficiency of layout conversions and lack sophisticated code generation techniques, as well as a solid theoretical foundation.
As a result, key aspects such as mixed precision, advanced hardware primitives, swizzled layouts, and efficient layout conversion remain largely unaddressed by these approaches.
The most relevant work to ours is CuTe~\cite{nvidia_cute}.
While both CuTe and linear layouts aim to address the challenge of flexible task mapping on emerging architectures, they differ in several key aspects.
First and foremost, CuTe is primarily designed for users to manually describe layouts, whereas linear layouts are integrated into a compiler.
Second, the linear algebra framework of linear layouts enables compilers to generate efficient code for layout conversion and code lowering for many common operators, which is absent in CuTe.
Third, swizzling is inherently defined within linear layouts, whereas in CuTe, it is treated as a separate step.
Additionally, dimensions in linear layouts are labeled, whereas CuTe uses unlabeled layouts.

\paragraph{Polyhedral compilation.}
Classic polyhedral compilers such as PluTo and Polly model the mapping from loop iterators to array indices as an affine function over $\mathbb{Z}$, using integer-linear programming to satisfy dependence, liveness, and boundary constraints~\cite{Bondhugula08,Grosser12,Sarkar19,Iannetta22}.
By contrast, linear layouts employed in tile-based programming frameworks map logical-tensor coordinates to physical hardware resources using a linear function over $\mathbb{F}_2$.
Bridging these two ideas opens a path to automatically lift sequential code into accelerator kernels (e.g., Triton).

\paragraph{Triton and Related Optimizations}
Recent work has explored enhancing the performance of DL models by either leveraging Triton as a programming language or improving Triton's compiler backend.
Li et al.~\cite{li2025tritonbench} investigated the automatic construction of Triton kernels using language models.
Ansel et al.~\cite{ansel2024pytorch} converted PyTorch code to Triton through tracing and heuristic-based optimizations, and He et al.~\cite{he2025cuasmrl} improved the performance of Triton-generated code using reinforcement learning.
We believe that linear layouts can enhance these frameworks by providing a well-defined mapping between hardware resources and logical tensors.
\vspace{-0.5em}
\section{Conclusions}
Linear layouts form the first theoretical foundation and implementation for resource mapping between complex hardware components and logical tensors.  
Through our framework, we prove the completeness of linear layouts under Triton's shape operators.
We also describe efficient code generation techniques using linear layouts.
Our experiments demonstrate that linear layouts not only enhance the robustness of the Triton compiler but also deliver non-trivial performance improvements. 
The primary limitation of linear layouts is the restriction to power-of-two shapes; however, this can be mitigated by defining larger tensors and masking out-of-boundary elements.
Operations such as flipping and slicing are not expressible as linear layouts $y = Ax$, but can be captured by the simple extension of `affine layouts' $y = Ax \oplus b$.
In the future, we plan to integrate linear layouts with hardware measurements to develop a holistic performance model for autotuning kernel performance.

\begin{acks}
We thank all reviewers for their valuable feedback.  
We also thank Lei Zhang and Vinod Grover for their suggestions on improving the paper.  
This work used AMD GPUs provided by AMD.  
Keren Zhou's research was supported in part by NSF Award 2411134.
\end{acks}

% use the ACM bibliography style
\bibliographystyle{ACM-Reference-Format}
\balance
\bibliography{reference}

%\clearpage
\onecolumn

\section{Appendix}\label{sec:appendix}
Here we present the proofs of the results we discussed in the main text.

\subsection{Layout Engine}
\textbf{Notation}.
As we will be working with labelled input and output dimensions, we will denote by $\id_k^{i, j}$ the \textbf{identity map} of shape $k\times k$ going from input dimension $i$ (\eg, $\reg, \thread, \warp$) to the $j$-th output dimension (often the logical tensor). More formally, since all these spaces have a canonical basis, it maps identically the subspace generated by the first $k$ bases from the input space into the subspace generated by the first $k$ basis of the output space.

We start with the proof that blocked layouts are linear layouts. This is one of those proofs that are trivial, but its simplicity gets hidden behind all the objects that are needed to formalize it.
\begin{proposition}
    Blocked layouts are linear layouts.
\end{proposition}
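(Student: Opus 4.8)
The plan is to show that a blocked layout, as specified by its defining parameters, can be assembled from the primitive constructions of \cref{def:product} and the identity maps $\id_k^{i,j}$, and is therefore a linear layout. A blocked layout in legacy Triton is determined by a tuple of per-dimension sizes: for each logical tensor dimension $d$ it records how many contiguous elements a thread holds (\texttt{sizePerThread}$_d = 2^{a_d}$), how many threads span that dimension (\texttt{threadsPerWarp}$_d = 2^{b_d}$), and how many warps span it (\texttt{warpsPerCTA}$_d = 2^{c_d}$), with a chosen iteration order over the dimensions. The key observation from \cref{sec:ll_intro} is that, because all these quantities are powers of two, the map sending the bit-vector encoding $(\reg, \thread, \warp)$ to the bit-vector encoding the logical coordinate is just a reindexing of bits: each bit of the register/thread/warp index lands in a definite bit position of a definite output dimension, with no bit ever being combined (XORed) with another. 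Such a pure bit-permutation is linear over $\F$.

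Concretely, I would first handle the one-dimensional case. For a single tensor dimension of size $2^{a+b+c}$, the register bits occupy the lowest $a$ positions, the thread bits the next $b$, and the warp bits the top $c$; this is exactly the block-diagonal product $\id_a^{\reg,0} \times \id_b^{\thread,0} \times \id_c^{\warp,0}$ after regrouping the input labels, i.e.\ a linear layout by \cref{def:product}. Next I would take the product over all $r$ tensor dimensions to get a layout from $\prod_d (\reg_d \times \thread_d \times \warp_d)$ to $\prod_d \F^{a_d+b_d+c_d} \iso \F^{d_1}\times\cdots\times\F^{d_r}$; by \cref{def:product} this is again linear. Finally I would observe that collecting the separate per-dimension $\reg_d$ (resp.\ $\thread_d$, $\warp_d$) factors into single $\reg$, $\thread$, $\warp$ input spaces, in the order dictated by the blocked layout's iteration order, is itself a relabeling/permutation of input basis vectors, hence realized by an appropriate composition with a permutation matrix, which is linear by \cref{def:product} (composition of linear maps). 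The broadcast case, where the tile is smaller or larger than the tensor, is absorbed by inserting zero columns (for tensor dimensions the tile over-covers) or by the \emph{Product} with an identity that replicates (for dimensions the tile under-covers), exactly as noted in \cref{sec:broadcasting}.

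The main obstacle is not mathematical depth but bookkeeping: one must be careful that the input-dimension labels are glued together consistently across the per-dimension products — the register index of the whole layout is an interleaving of the per-dimension register indices in a prescribed order, and similarly for threads and warps — and that the output-dimension labels match the flattening $\F^{d_1}\times\cdots\times\F^{d_r}$. Making this precise requires the $\id_k^{i,j}$ notation and a permutation of basis vectors to reorder the glued factors, which is why, as the authors note, the triviality "gets hidden behind all the objects needed to formalize it." Once the indexing is pinned down, linearity is immediate because every constituent map is either an identity, a product, or a coordinate permutation, and each of these was shown linear in \cref{def:product}.
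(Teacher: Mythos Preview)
Your proposal is correct and follows essentially the same approach as the paper: both build the blocked layout out of labeled identity blocks $\id_k^{i,j}$, assemble them via the product of \cref{def:product}, and finish with a coordinate permutation to align the iteration order, concluding linearity because every constituent is linear. The only cosmetic difference is the grouping order---the paper takes the product over dimensions within each of $\reg$, $\thread$, $\warp$ first and then combines the three (applying $\sigma_o^{-1}$ on the output), whereas you handle each tensor dimension first and then regroup the input labels---but this is the same construction up to trivial reindexing.
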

\begin{proof}
    For a blocked layout associated to a tensor of shape $(d_1, \dots, d_\ell)$, consider the tuples of length $\ell$ $R, T, W$ representing the $\log_2$ of the number of registers, threads, and warps per dimension. Note that $R_i + T_i + W_i = d_i$. A blocked layout also has an order $o$, represented by a permutation of $\set{1\dots \ell}$ where $o_i$ represents the $i$-th fastest running dimension. We then define
    \[
        \id_R^o = \id_{r_{o_1}}^{\reg, o_1} \times \dots \times \id_{r_{o_\ell}}^{\reg, o_\ell}
    \]
    and $\id_T^o, \id_W^o$ similarly. Consider also the permutation of the dimensions by the order $o$
    \[
        \deffun{\sigma_o : \F^{d_1} \times \dots \times \F^{d_\ell} -> \F^{d_{o_1}} \times \dots \times \F^{d_{o_\ell}};}.
    \]

    Finally, with all this notation in place, the linear layout associated to this blocked layout is given by
    \[
        \deffun{\sigma_o^{-1} \circ \pa{\id^o_R \times \id^o_T \times \id^o_W} : \F^{\card{R}} \times \F^{\card{T}} \times \F^{\card{W}} -> \F^{d_1} \times \dots \times \F^{d_\ell};}.
    \]
    Note this is a linear map, as it is a composition of linear maps.
\end{proof}

\begin{proposition}
    The input and output layouts of \code{mma} and \code{wgmma} are linear layouts.
\end{proposition}
\begin{proof}
    In this case the logical matrix is two-dimensional. The definition of the tile is rather straightforward. For an input of bitwidth $b$, the lhs input and the output tile on registers for \code{mma} is given by
    \[
        \id^{\reg,1}_{\log_2(32/b)} \times
        \id^{\thread,1}_2 \times
        \id^{\thread,0}_3 \times
        \id^{\reg,0}_1 \times
        \id^{\reg,1}_1.
    \]
    and the rhs one by
    \[
        \id^{\reg,0}_{\log_2(32/b)} \times
        \id^{\thread,0}_2 \times
        \id^{\thread,1}_3 \times
        \id^{\reg,1}_1.
    \]
    which is the transpose of the first one with half the registers per thread.

    The input tile for the lhs of \code{wgmma} is given by multiplying the lhs tile of \code{mma} by $\id^{\warp,0}_2$ to cover the whole warp-group.

    The rest of the tile for the output is given by multiplying the first tile by $\id_W^o$, as defined in the proof of~\cref{prop:blocked} for a fixed order $o$---the order may be chosen by the implementation.

    The input warp part of the input tiles is then computed by looking at the warp that owns each output tile and making sure the given warp (resp.\  warp-group) has all the elements necessary to compute iteratively the reduction along the inner dimension. In other words, following the same warp order as the output, we need to broadcast (\ie, add a column of all zeros to the matrix) for every warp owning data on the inner dimension and multiply by the identity if it is the outer one.
\end{proof}

\begin{theorem}[Triton's Layout Engine]
    Consider the shape operations in Triton: \code{tt.trans}, \code{tt.reshape}, \code{tt.join}, \code{tt.split}, \code{tt.expand\_dims}, and \code{tt.broadcast}. The family of distributed layouts, as defined in~\cref{thm:distr}, is forward (resp.\ backward) closed under these operations. This means that for every input (resp.\ output in the image) distributed layout, there exists an output (resp.\ input) layout from the same family such that the operation effectively becomes a no-op. Furthermore, the family of distributed layouts is the smallest family of layouts satisfying this property.
\end{theorem}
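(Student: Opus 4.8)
The plan is to handle all six operations by a single device: each one induces a linear map $\hat\phi$ on the bit-space of the logical tensor, and the no-op output (resp.\ input) layout is obtained by post-composing (resp.\ pre-composing) the given layout matrix with that induced map. Concretely, under the identification $\F^{d_1}\times\dots\times\F^{d_\ell}\iso\F^d$, the operations $\code{tt.reshape}$ and $\code{tt.expand\_dims}$ are the identity on bits, $\code{tt.trans}$ is a permutation of the $d$ coordinates, $\code{tt.join}$ and $\code{tt.split}$ insert/delete a single size-$2$ coordinate, and $\code{tt.broadcast}$ relates the small and large tensors by deleting/inserting the $m$ bits of the broadcast axis. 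First I would establish, operation by operation, that $\hat\phi$ is linear and that its one-sided inverse $\hat\phi^{+}$ (inverse permutation for $\code{tt.trans}$, identity for $\code{tt.reshape}/\code{tt.expand\_dims}$, coordinate projection for $\code{tt.split}/\code{tt.broadcast}$) is also linear.

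For forward closure, given an input distributed layout with matrix $M$, I take the output layout $\hat\phi\circ M$ and check it is again distributed. Every column of $M$ is $0$ or a standard basis vector $e_k$; $\hat\phi$ sends such a column to $0$ or to a single basis vector, so the ``at most one non-zero bit per column'' invariant survives. ``No repeated non-zero columns'' survives because $\hat\phi$ is injective on the basis vectors it does not annihilate. Surjectivity onto the output tensor follows from surjectivity of $M$ together with the fact that $\hat\phi$ is onto the part of the output tensor that is not a pure replication of already-present data. Finally, $\hat\phi\circ M$ makes the operation a no-op essentially by definition: a hardware resource holding element $x$ before the operation holds $\hat\phi(x)$ after, which is exactly $\hat\phi\circ M$ evaluated at that resource, so no data movement across the hardware hierarchy is needed. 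Backward closure is dual: for an output layout $M'$ on the image tensor, the input layout $\hat\phi^{+}\circ M'$ works; the only new phenomenon is that a projection $\hat\phi^{+}$ may send a non-zero column of $M'$ to $0$, producing a zero column---which is permitted, since the definition forbids only repeated \emph{non-zero} columns---and that deleting rows can never force two non-zero columns to coincide.

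For minimality I read ``smallest family satisfying this property'' as ``smallest family that is forward- and backward-closed under the six operations \emph{and} contains every blocked layout'' (blocked layouts being the default anchor layouts, so any usable family must contain them). Containment of such a closed family inside the distributed layouts is the first half of the theorem; the reverse inclusion is what needs work. First, a distributed layout with no zero columns is exactly a permutation matrix whose columns are labelled by registers, threads, and warps. Starting from the blocked layout whose matrix is the identity on a one-dimensional tensor with a prescribed register/thread/warp split (all such blocked layouts being available), I reshape it to an $n$-fold product of size-$2$ axes, apply $\code{tt.trans}$ to realise an arbitrary permutation of those axes---$\code{tt.trans}$ on a $k$-axis tensor realises all of $S_k$, hence all of $S_n$ acting on the $n$ bits---and reshape back, obtaining every zero-column-free distributed layout. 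For a general distributed layout $M$ with $m$ zero columns, build $M'$ by replacing those columns with $m$ fresh basis vectors pointing into a new trailing axis of size $2^m$; $M'$ is zero-column-free, hence already in the family, and one application of $\code{tt.broadcast}$ \emph{backwards}---with $M'$ as the output layout and the broadcast axis contracted from $2^m$ to $1$---returns exactly $M$. Hence every distributed layout lies in the family, so the distributed layouts form the smallest such family.

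I expect the main obstacle to be the asymmetric bookkeeping around $\code{tt.broadcast}$ (and, to a lesser extent, $\code{tt.split}$): going forward, a broadcast genuinely enlarges the tensor, so the no-op output layout either reuses existing broadcasting slack or introduces replicated registers encoded as a new column pointing at the broadcast axis, rather than being a bare post-composition; going backward it produces zero columns, and one must verify that the surjectivity and ``no repeated non-zero columns'' invariants are stable under the resulting row deletions, and that the ``output in the image'' hypothesis is exactly strong enough for the required section to exist and itself be a distributed layout. A secondary subtlety is the minimality argument's reliance on $\code{tt.trans}$ generating the full symmetric group on bits, which requires care with the row-major convention and the order attribute of blocked layouts, and the fact that $\code{tt.join}/\code{tt.split}$ nominally act on two operands, so the two input layouts must be made to agree on the common shape before the operation can be a no-op.
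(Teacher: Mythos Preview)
Your proposal is correct and follows essentially the same route as the paper: both reduce forward/backward closure to the fact that the shape operations act linearly on the bit-space, and both prove minimality by reshaping to a $2\times\dots\times 2$ hypercube, using \code{tt.trans} to realise arbitrary bit permutations from a blocked starting point, and then inserting zero columns. The one small refinement in your version is that you introduce the zero columns via the backward transfer function of \code{tt.broadcast}, whereas the paper's proof invokes ``reducing along arbitrary dimensions''---an operation not actually in the stated list---so your phrasing is arguably cleaner on that point.
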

\begin{proof}
    All these operations acting on the logical tensor are clearly linear, so the first part of the theorem follows naturally. Constructing the backward transfer function is essentially equivalent to constructing the forward ones.

    To prove the second part, we can reshape any tensor into the form $2\times2\times \dots \times2$ and apply dimension transpositions, reducing the problem to whether these operations can generate an arbitrary layout with zeros and ones over this hypercube. Since a layout of all ones can be created using the blocked encoding, and arbitrary zeros can be inserted by reducing along arbitrary dimensions, we do need all the linear layouts included in~\cref{thm:distr}, so this set is minimal.
\end{proof}

\subsection{Optimal Swizzling}\label{sec:opt_swiz}
In this section, we cover in detail the swizzling algorithm presented in the main text.

This algorithm computes an optimal swizzled layout that maximizes read/write vectorization while minimizing bank conflicts for arbitrary linear layouts. It is not difficult to generalize it to leverage \code{ldmatrix} and \code{stmatrix} and other intrinsics, but here, we will focus on vectorization for simplicity.

\textbf{Modeling Bank Conflicts in Linear Algebra}.
To model bank conflicts, we first define the vectorization set $V$ of size $2^v$ by choosing bases of $A_\reg \cap B_\reg$ as done for warp shuffles. For a data type with byte width $w$, let $b$ be the logarithm of the number of vectorized elements needed to cover all the shared memory banks. On modern GPUs, this is $b = \log_2 \tfrac{128}{2^vw}$.

We represent shared memory as a map
\[
    \deffun{S : \F^v \times \F^b \times \F^\ell -> \F^d;},
\]
where $\ell = d - v - b$. Here, the first space represents the vectorization $\vect$, the second represents the $\bank$, and the third represents the bank $\idx$ in shared memory.

By linearity, we obtain the following criterion for bank conflict-free memory access:

\begin{lemma}
    Given a shared memory layout $\deffun{S : \F^v \times \F^b \times \F^\ell -> \F^d;}$ and a distributed layout $L$ both representing elements of byte width $w$. Denote
    \[
    c = \card{\sspan\pa{S_\vect \cup S_\idx} \cap \sspan \pa{L_\thread}}.
    \]
    The memory operation will be performed in at least $c$ wavefronts.
    Even more, if each vectorized element covers $n \geq 1$ banks, \ie,  $n = \tfrac{2^vw}{4} \geq 1$, the operation will be performed in exactly $nc$ wavefronts.
\end{lemma}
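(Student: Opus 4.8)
The plan is to collapse the whole access pattern into the single $\F$-linear map $S^{-1}\circ L$ and then read the bank-conflict multiplicity off as the size of a subspace intersection. Two structural facts drive everything. First, $S$ is invertible (it is a memory layout, cf.\ \cref{def:memory}), so $\F^d$ is the internal direct sum $\sspan\pa{S_\vect}\oplus\sspan\pa{S_\bank}\oplus\sspan\pa{S_\idx}$ and, under the coordinates that $S^{-1}$ induces, $\sspan\pa{S_\vect\cup S_\idx}$ is exactly the subspace on which the $\bank$-coordinate vanishes. Second, by \cref{thm:distr} every column of the matrix of $L$ is zero or a distinct standard basis vector, so the spans of the $\reg$-, $\thread$- and $\warp$-columns are generated by pairwise disjoint sets of basis vectors; in particular $\sspan\pa{L_\thread}\cap\sspan\pa{S_\vect}=\set{0}$, using that $S_\vect=\sspan\pa{L(V)}$ for the vectorization basis $V\subseteq L_\reg$ from which $S$ was built.

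First I would describe the footprint of one vectorized warp instruction, fixing a warp and the value of the registers outside $V$. Because $S^{-1}\circ L$ carries $\sspan(V)$ bijectively onto the $\vect$-space, the $2^v$-element access issued by a thread sweeps the entire $\vect$-space while its $\bank$- and $\idx$-coordinates stay fixed; write these as $\beta(t)$ and $\xi(t)$, which are linear in the thread index $t$ once the fixed point is taken as origin, i.e.\ $\beta$ and $\xi$ are the inclusion $\sspan(L_\thread)\hookrightarrow\F^d$ followed respectively by the $\bank$- and $\idx$-projections of $S^{-1}$. Two threads hit the same block of $n$ consecutive banks iff they agree under $\beta$, and they read the same shared-memory word iff they also agree under $\xi$.

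Next I would compute the fibres. The first structural fact gives $\ker\beta=\set{t\in\sspan(L_\thread) : L(t)\in\sspan\pa{S_\vect\cup S_\idx}}$, and since $L$ restricted to the $\thread$-space surjects onto $\sspan(L_\thread)$ with kernel the broadcast set $K_0=\set{t : L(t)=0}$, the cardinality of $\ker\beta$ is $c$ times that of $K_0$. The second structural fact gives $\ker\beta\cap\ker\xi=K_0$, because $L(t)\in S_\vect$ forces $L(t)\in\sspan(L_\thread)\cap S_\vect=\set{0}$. Quotienting, $\xi$ attains exactly $c$ distinct values on $\ker\beta$, hence exactly $c$ on each occupied block of banks: every physical bank that is touched is demanded at $c$ distinct shared-memory words by the warp---a genuine $c$-way conflict, not a broadcast. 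This already yields the ``at least $c$ wavefronts'' bound, and an analogous argument, merging the $\vect$ and $\bank$ spaces first, handles the degenerate case $n<1$.

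Finally, for the sharp count when $n\ge 1$ I would invoke the hardware model in which a per-thread vectorized access of $2^vw=4n$ bytes is decomposed into $n$ phases---one $128$-byte wavefront per phase in the conflict-free case, and $k$ wavefronts for a phase carrying a $k$-way conflict. Each of the $n$ phases inherits the $c$-way conflict computed above, so the operation completes in exactly $nc$ wavefronts; within this model a matching schedule is immediate, devoting one wavefront to each of the $c$ segments in a phase while every touched bank stays active. I expect the main obstacle to be not the linear algebra---identifying $\ker\beta$ and peeling off the broadcast kernel is routine once invertibility of $S$ and the single-bit-column shape of $L$ are in hand---but stating the wavefront model (the $n$-phase split and the $128$-byte transaction cap) precisely enough that ``exactly $nc$'' is a theorem rather than merely a faithful description of the hardware; without such an assumption, clever cross-phase coalescing can sometimes beat $nc$.
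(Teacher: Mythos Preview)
Your argument is correct and matches the paper's approach: use $S_\vect\subseteq L_\reg$ to strip the vectorization part from the intersection, count the $c$ threads landing on each bank via the remaining subspace, and invoke the hardware's $128$-byte transaction split to pick up the factor $n$ when $n\ge 1$. The paper's own proof is a terser three-case sketch of the same idea; your explicit projections $\beta,\xi$ and the step $\ker\beta\cap\ker\xi=K_0$ (ruling out that the $c$ colliding requests are mere broadcasts) flesh out what the paper leaves implicit, and your closing caveat about needing the wavefront model as an assumption is apt---the paper likewise treats the $n$-phase decomposition as a hardware fact rather than something derived.
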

\begin{proof}
    $S_\vect \subseteq L_\reg$, so its intersection with $L_\thread$ is trivial. It is then enough to look at $S_\idx \cap L_\thread$.
    We split the proof into three cases:

    \textbf{Each thread covers exactly one bank: $2^vw = 4$}. Since $\log_2 c = S_\idx \cap L_\thread$, there are $\log_2 c$ elements that will conflict performing the memory op in the bank with idx $0$. The same will happen with the other banks, so there will be exactly $c$ wavefronts, or $c-1$ bank conflicts.

    \textbf{Vectorized case. Each thread covers more than one bank: $n > 1$}. In this case, we have that $\card{S_\bank} = \tfrac{5}{\log_2 n}$. This corresponds to the case where we perform vectorized loads and stores. In current NVIDIA and AMD GPUs $n$ is allowed to be $2$ or $4$. In this case, the same reasoning as before goes through. We get $nc$ wavefronts because each vectorized shared memory operation is split into $128$ byte transactions.

    \textbf{Not enough vectorization. Each thread does not cover one full bank: $2^vw < 4$}. In this case, we do not have enough vectorization to cover one full bank with a thread, so there may be more bank conflicts on bank $0$ (and other banks) so we get that the number of wavefronts may be larger than $c$. Padding helps improve performance in this case at the expense of a higher memory footprint.
\end{proof}

When the vectorized elements cover at least one bank, and the intersection is trivial, the operation will have optimal throughput.

\textbf{Choosing a Basis for Bank Indices}.
Since we care about bank conflicts on reads and writes, we define
\[
    P = \sspan\pa{S_{\vect} \cup A_{\thread}} \cup \sspan\pa{S_{\vect} \cup B_{\thread}}.
\]
Note that $P$ is a union of two subspaces, so it is not a subspace itself.
As such, to minimize bank conflicts, we are interested in finding the largest basis $H$---and thus, the largest subspace---such that $P \cap \sspan \pa{H} = \set{0}$.

We start by constructing a basis $C$ of the complement subspace of $P$, \ie, we complete a basis of $\sspan \pa{P}$ into a basis of $\F^d$. It's clear that $\sspan \pa{P} \cap \sspan \pa{C} = \set{0}$.

Next, define the bases (\ie, the sets without the zero vector)
\[
E = A_\thread \backslash B_\thread, \quad F = B_\thread \backslash A_\thread.
\]
Without loss of generality, assume that $\card{E} \leq \card{F}$. We then enumerate their elements and construct
\[
    G = \set{e_i \oplus f_i | e_i \in E, f_i \in F, 1 \leq i \leq \card{E}}.
\]
By construction, $\sspan\pa{G}$ is in the complement of $P$. Even more, $\sspan\pa{G} \cap \sspan \pa{P} = \set{0}$.

Now, we determine the columns of $S_\idx$ as follows:
\begin{itemize}
\item If $\card{G} + \card{C} \geq \ell$, we select $\ell$ elements from $G \cup C$.
\item If $\card{G} + \card{C} < \ell$, bank conflicts are unavoidable. We add the remaining $\ell - \card{G} - \card{C}$ vectors from $A_\thread$, introducing both read and write bank conflicts.
\end{itemize}

Finally, having defined $S_\idx$, we determine $S_\bank$ by computing a basis for the complement of $\sspan\pa{S_\vect \cup S_\idx}$.

Let us now prove that this algorithm is indeed optimal. Before doing so, we will prove an abstract lemma from which the result will follow. We denote the cross product $U \times V$ as $U \oplus V$ as it makes the notation much clearer.
\begin{lemma}\label{lemma:abstract_swizzling}
    Given $U, V \subseteq \F^d$ subspaces. The largest subspace with trivial intersection with $U \cup V$ has dimension $d - \max\pa{\dim U, \dim V}$.
\end{lemma}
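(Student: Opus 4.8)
The plan is to prove matching upper and lower bounds for $\dim W$, where $W$ ranges over all subspaces of $\F^d$ with $W \cap (U \cup V) = \set{0}$. Assume without loss of generality that $\dim U \geq \dim V$ and write $m = \dim U$. The upper bound is immediate: if $W \cap (U \cup V) = \set{0}$ then in particular $W \cap U = \set{0}$, so $W + U$ is a direct sum inside $\F^d$, whence $\dim W \leq d - \dim U = d - \max(\dim U, \dim V)$.

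For the lower bound I would construct a witnessing subspace of dimension $d - m$ greedily, adding one basis vector at a time while maintaining the invariant that the partial subspace $W_j$ of dimension $j$ satisfies $W_j \cap U = \set{0}$ and $W_j \cap V = \set{0}$. Start from $W_0 = \set{0}$. Given $W_j$ with $j < d - m$, adding a vector $w \notin W_j$ keeps the intersection with $U$ trivial exactly when $w \notin W_j + U$ --- since an element $a \oplus w$ of $W_j \oplus \sspan(w)$ lies in $U$ iff $w \in a \oplus U$ for some $a \in W_j$ --- and symmetrically for $V$. So it suffices to exhibit $w \in \F^d \setminus \bigl((W_j + U) \cup (W_j + V)\bigr)$.

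The heart of the argument is a counting estimate. Because $W_j \cap U = \set{0}$ and $W_j \cap V = \set{0}$, the sums are direct, so $\card{W_j + U} = 2^{j+m}$ and $\card{W_j + V} = 2^{j + \dim V}$; moreover both sumsets contain $W_j$, so inclusion--exclusion gives $\card{(W_j + U) \cup (W_j + V)} \leq 2^{j+m} + 2^{j+\dim V} - 2^{j}$. Since $j \leq d - m - 1$ and $\dim V \leq m$, this is at most $2^{d-1} + 2^{d-1} - 2^{j} = 2^{d} - 2^{j} < 2^{d}$, so a valid $w$ exists and $W_{j+1} = W_j \oplus \sspan(w)$ preserves the invariant. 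Iterating until $j = d - m$ yields the desired subspace, completing the lower bound and hence the lemma.

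The step I expect to require the most care is exactly this counting bound: one must invoke the invariant to compute the two sumset cardinalities \emph{exactly} (not just bound them), and one must use that \emph{both} sumsets contain $W_j$ to extract the $-2^{j}$ term from inclusion--exclusion. Without that term the bound would read $2^{d-1} + 2^{d-1} = 2^{d}$ and the greedy step could stall at the boundary; the strict inequality is what makes the construction go through. The remaining pieces --- the direct-sum bookkeeping, the base case $W_0 = \set{0}$, and the translation between ``$W_{j+1}$ meets $U$'' and ``$w \in W_j + U$'' --- are routine.
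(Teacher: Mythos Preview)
Your proof is correct and takes a genuinely different route from the paper's. The paper builds the witnessing subspace explicitly: it decomposes $\F^d = I \oplus E \oplus F \oplus C$ with $I = U \cap V$, with $E$ and $F$ complementing $I$ inside $U$ and $V$ respectively, and with $C$ complementing $\sspan\pa{U \cup V}$ in $\F^d$; pairing bases of $E$ and (part of) $F$ it sets $G = \sspan\set{e_i \oplus f_i}$ and checks directly that $C \oplus G$ meets $U \cup V$ trivially and has the claimed dimension. Your greedy inclusion--exclusion argument bypasses this structural decomposition entirely and is arguably more elementary, but it is purely existential. That distinction matters here: the paper's explicit $C \oplus G$ is precisely the subspace the optimal-swizzling algorithm computes, and the follow-on correctness lemma appeals to this construction by name. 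So while your argument establishes the abstract lemma cleanly, it would not by itself certify the algorithm --- one would still need to exhibit the specific $C \oplus G$ and verify that it works. A minor aside: your counting is specific to $\F$ (or at least finite fields), whereas the paper's decomposition goes through over any field; that extra generality is not needed here.
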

\begin{proof}
    Define $I = U \cap V$ and decompose $U = I \oplus E$, $V = I \oplus F$ where $E, F$ are the complementary spaces of $I$. Now extend $\sspan\pa{U \cup V}$ into the whole space via $C$ finding the decomposition
    \[
        \F^d = I \oplus E \oplus F \oplus C.
    \]
    In other words, any element of $\F^d$ is of the form $i \oplus e \oplus f \oplus c$ with $i \in I, e \in E, f \in F, c \in C$.

    Without loss of generality, consider $\dim U \leq \dim V$. Choose bases on $E$ and $F$ $\mathcal{B}_E = \set{e_1, \dots, e_k}, \mathcal{B}_F = \set{f_1, \dots, f_{k + n}}$ for $n \geq 0$ and define
    \[
        G = \sspan\set{e_i \oplus f_i | 1 \leq i \leq k}.
    \]
    More abstractly, $G$ can be defined via any injective linear map $\deffun{\phi : E -> F;}$ as $E \oplus \phi(E)$.

    Now, the set $C \oplus G$ has trivial intersection with $U \cup V$ and has dimension $d - \max\pa{\dim U, \dim V}$.

    It is also clear that this set is maximal, as a set of dimension $d - \dim V + 1$ would have non-trivial intersection with $V$.
\end{proof}

The correctness lemma is a corollary of the abstract lemma we just proved.
\begin{lemma}
    With notation as defined in~\cref{sec:conversion}, $\sspan\pa{S_\idx}$ is a subspace of dimension $\ell$ with minimal intersection with $P$.
\end{lemma}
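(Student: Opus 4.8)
The plan is to obtain the statement as a corollary of the abstract \cref{lemma:abstract_swizzling} by feeding it the right pair of subspaces. Set $U = \sspan\pa{S_\vect \cup A_\thread}$ and $V = \sspan\pa{S_\vect \cup B_\thread}$, so that $P = U \cup V$ in the notation of \cref{sec:conversion}. The first thing I would record is that, by the characterization of distributed layouts in \cref{thm:distr}, the basis vectors occurring in $S_\vect$ (a subset of $A_\reg \cap B_\reg$), in $A_\thread$, and in $B_\thread$ are pairwise distinct standard basis vectors; and, assuming no broadcasting along the thread dimension, $\card{A_\thread} = \card{B_\thread}$. Consequently $U \cap V = \sspan\pa{S_\vect \cup \pa{A_\thread \cap B_\thread}}$, the sets $E = A_\thread \setminus B_\thread$ and $F = B_\thread \setminus A_\thread$ are bases of complements of $U \cap V$ inside $U$ and $V$ respectively, $\card{E} = \card{F}$, and $\dim U = \dim V$. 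This is precisely the configuration that the proof of \cref{lemma:abstract_swizzling} is built around, so that proof, read off verbatim, already tells us that the set $C \cup G$ produced by the algorithm is linearly independent, that $\sspan\pa{C \cup G}$ meets $P$ only at $0$, and that $\card{C} + \card{G} = d - \max\pa{\dim U, \dim V}$ is the largest possible dimension of a subspace having trivial intersection with $P$.

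With this in hand I would split into the two branches of the construction. If $\card{G} + \card{C} \ge \ell$, then $S_\idx$ is an $\ell$-element subset of the independent set $C \cup G$, so $\dim \sspan\pa{S_\idx} = \ell$ and $\sspan\pa{S_\idx} \subseteq \sspan\pa{C \cup G}$ intersects $P$ only at $0$, which is trivially minimal, and we are done. If $\card{G} + \card{C} < \ell$, then $\ell$ exceeds $d - \max\pa{\dim U, \dim V}$, so a nontrivial intersection is forced: for any subspace $W$ with $\dim W = \ell$, since $U \subseteq P$ and $V \subseteq P$, a rank count gives $\dim\pa{W \cap U} \ge \ell + \dim U - d$ and likewise for $V$, so $W$ must meet each of $U$ and $V$ in dimension at least $k := \ell - \pa{d - \max\pa{\dim U, \dim V}}$. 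I would then verify the algorithm attains this bound: $S_\idx = C \cup G \cup \set{x_1, \dots, x_k}$ with the $x_i$ chosen as distinct basis vectors of $A_\thread$ (from $E$ first, then from $A_\thread \cap B_\thread$). Because $\sspan\pa{C \cup G} \cap U = \set{0}$ and each $x_i \in U \subseteq P$, the union stays linearly independent, giving $\dim\sspan\pa{S_\idx} = \ell$, and $\sspan\pa{S_\idx} \cap U = \sspan\set{x_1,\dots,x_k}$, of dimension exactly $k$; the symmetric computation on the $V$ side (using $\card{E} = \card{F}$, together with the interaction of the $x_i$ with $G$) yields $\dim\pa{\sspan\pa{S_\idx} \cap V} = k$ as well, matching the lower bound.

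I expect the main obstacle to be the bookkeeping in this second branch: making precise what ``minimal intersection with $P$'' should mean when $P$ is a union of two subspaces rather than a subspace — the right reading is that the intersections with $U$ and with $V$, which control read and write bank conflicts respectively via the wavefront lemma, are each minimized — and checking that the $k$ extra vectors can indeed be drawn from $A_\thread$ so as to both keep $C \cup G \cup \set{x_1,\dots,x_k}$ linearly independent and avoid inflating $\sspan\pa{S_\idx} \cap U$ or $\sspan\pa{S_\idx} \cap V$ past the forced value $k$. This is exactly the place where the no-broadcasting hypothesis $\card{E} = \card{F}$ is essential, since it is what makes $U$ and $V$ symmetric. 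The first branch, the independence of $C \cup G$, the triviality of $\sspan\pa{C \cup G} \cap P$, and the rank inequalities are all immediate from \cref{lemma:abstract_swizzling} and standard dimension arguments over $\F$, so no separate work is needed there.
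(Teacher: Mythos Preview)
Your proposal is correct and takes essentially the same approach as the paper: both reduce to \cref{lemma:abstract_swizzling} by identifying $U = \sspan\pa{S_\vect \cup A_\thread}$, $V = \sspan\pa{S_\vect \cup B_\thread}$ and recognizing that the algorithm's $C$, $G$ are exactly the objects built in that lemma's proof. Your treatment is in fact more thorough than the paper's one-line proof, which only invokes the maximality of $C \oplus G$ and does not spell out the second branch; your dimension-count lower bound and the verification that adding $k$ vectors from $A_\thread$ realizes $\dim\pa{\sspan\pa{S_\idx}\cap U} = \dim\pa{\sspan\pa{S_\idx}\cap V} = k$ fill that gap, and your explicit reading of ``minimal intersection with $P$'' as simultaneous minimality against $U$ and $V$ is the right one. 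The one place where you narrow the setting is the no-broadcasting assumption giving $\card{E}=\card{F}$, whereas the paper's construction allows $\card{E}\le\card{F}$; this only affects the bookkeeping in the second branch and the same argument goes through with $\card{G}=\min\pa{\card{E},\card{F}}$.
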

\begin{proof}
    It follows from~\cref{lemma:abstract_swizzling} as $\sspan\pa{S_\idx}$ is defined as the subspace $C \oplus G$ in the proof of that theorem, which we have shown is maximal.
\end{proof}

\end{document}